\newtheorem{observation}{Observation}
\newcommand{\NP}{{\sf NP}}
\newcommand{\svcfc}{\mathrm{svcfc}}
\newcommand{\dist}{\mathrm{dist}}
\newcommand{\diam}{\mathrm{diam}}
\newcommand{\rad}{\mathrm{rad}}
\newcommand{\ecc}{\mathrm{ecc}}
\def\3SAT{\textup{\textsc{3-sat}}}
\tikzstyle{vertex}=[draw,circle,inner sep=1.3pt,fill=black]
\tikzstyle{clause}=[draw,circle,inner sep=1.3pt,fill=black]
\tikzstyle{var}=[draw,rectangle,inner sep=1.5pt,fill=black] 
\begin{document}

\title{
The complexity of strong conflict-free vertex-connection $k$-colorability
}
\author{Sun-Yuan Hsieh\inst{1} \and 
Hoang-Oanh Le\inst{2} \and 
Van Bang Le\inst{3} 
\and Sheng-Lung Peng\inst{4}
}
\institute{
Department of Computer Science and Information Engineering, National Cheng Kung University, No.~1, University Road, Tainan, 70101, Taiwan\\
\email{hsiehsy@mail.ncku.edu.tw}
\and
Independent Researcher, Berlin, Germany\\
\email{HoangOanhLe@outlook.com}
\and
Institut f\"{u}r Informatik, Universit\"{a}t Rostock,
Rostock, Germany\\ 
\email{van-bang.le@uni-rostock.de}
\and
Department of Creative Technologies and Product Design, National Taipei University of Business, Taoyuan, 32462, Taiwan\\
\email{slpeng@ntub.edu.tw}
}

\maketitle


\begin{abstract}
We study a new variant of graph coloring by adding a connectivity constraint. 
A path in a vertex-colored graph is called \emph{conflict-free} if there is a color that appears exactly once on its vertices. 
A connected graph $G$ is said to be \emph{strongly conflict-free vertex-connection $k$-colorable} if $G$ admits a vertex $k$-coloring such that any two distinct vertices of $G$ are connected by a  conflict-free \emph{shortest} path.

Among others, we show that deciding whether a given graph is strongly conflict-free vertex-connection $3$-colorable is \NP-complete even when restricted to $3$-colorable graphs with diameter~$3$, radius~$2$ and domination number~$3$, and, assuming the Exponential Time Hypothesis (ETH), cannot be solved in $2^{o(n)}$ time on such restricted input graphs with $n$ vertices. 
This hardness result is quite strong when compared to the ordinary $3$-{\sc coloring} problem: it is known that $3$-{\sc coloring} is solvable in polynomial time in graphs with bounded domination number, and assuming ETH, cannot be solved in $2^{o(\sqrt{n})}$ time in $n$-vertex graphs with diameter~$3$ and radius~$2$.
On the positive side, we point out that a strong conflict-free vertex-connection coloring with minimum color number of a given split graph or a co-bipartite graph can be computed in polynomial time. 

\keywords{Graph coloring \and strong conflict-free vertex-connection coloring \and computational complexity} 
\end{abstract}

\setcounter{footnote}{0} 

\section{Introduction and results}
Graph connectivity is a fundamental topic in graph theory and combinatorial optimization. In the last years, a number of colored versions of graph connectivity have been introduced and investigated, such as rainbow, monochromatic, proper connection and conflict-free connection, both in edge-colored and vertex-colored graphs. We refer to, e.g.,  \cite{CzapJV18,LiM15,LiSS13,LiS17,LiW18,LiZZMZJ20} 
for more details. 
Besides of theoretical motivations, the colored constraints make the new versions of graph connectivity applicable to practical problems such as security and accessibility in communication networks (\cite{ChartrandJMZ08}); see also~\cite{BrauseJS18} for a meta-concept of graph connectivity with practical application scenarios. 

Motivated by conflict-free graph and hypergraph colorings which are useful models to solve, e.g., the problem of assigning frequencies to different base stations in cellular networks~\cite{EvenLRS03,Smorodinsky13} (see also~\cite{ChangJLZ21}),  the concepts of conflict-free (edge-)connection and conflict-free vertex-connection have been introduced in~\cite{CzapJV18} and~\cite{LiZZMZJ20}, respectively. Since then, many research papers have focused on this topic; see, e.g.,~\cite{ChangJLZ21,DoanHS22,DoanS21,JiLZ20,LiZ20,LiW18+,LiW23} and the recent survey~\cite{ChangH24}.  

An edge-colored (vertex-colored) graph is \emph{conflict-free (vertex-)connected} if any two distinct vertices are connected by a \emph{conflict-free path}, a path that is such that there is some color that occurs on exactly one edge (vertex) on the path. The \emph{conflict-free connection (vertex-connection) number} of a graph~$G$, written $\mathrm{cfc}(G)$ ($\mathrm{vcfc}(G)$), is the smallest color number that makes $G$ conflict-free (vertex-)connec\-ted. It turned out that, while computing these parameters is easy for graphs with at most one cut-vertex, the authors of \cite{CzapJV18,LiZZMZJ20} noted that it is very hard to determine $\mathrm{cfc}(T)$ and $\mathrm{vcfc}(T)$ for trees $T$; 
for paths $P$, $\mathrm{cfc}(P)$ and $\mathrm{vcfc}(P)$ are known.  
Also, the \emph{strong} version of conflict-free connection has been investigated~\cite{JiL20,JiLZ20}: an edge-colored graph is said to be \emph{strongly conflict-free connected} if any two distinct vertices are connected by a conflict-free \emph{shortest} path.  
Among others, strongly conflict-free connection $2$-colorable cubic graphs have been characterized in~\cite{JiL20}. It is shown in~\cite{JiLZ20} that computing the strong conflict-free connection number is \NP-hard. 
The complexity of computing the conflict-free (vertex-)connection number is still open. 

In this paper, we address the \emph{strong} version of conflict-free vertex-connection: we require that any two distinct vertices in a vertex-colored graph are connected by a conflict-free \emph{shortest} path. To the best of our knowledge, this strong vertex-version has not been considered before. 
%
\begin{definition}[Strong conflict-free vertex-connection coloring]
Let $G$ be a graph and $k\ge 1$ be an integer. A function $f:V(G)\to [k]=\{1,2,\ldots,k\}$  is a \emph{strong conflict-free vertex-connection $k$-coloring} of~$G$ if any two distinct vertices~$u$ and~$v$ of $G$ are connected by a \emph{strong conflict-free $u,v$-path}, a \emph{shortest} $u,v$-path that is such that there is a color $c\in [k]$ that occurs on exactly one vertex of the path. 

\noindent
The \emph{strong conflict-free vertex-connection number} of $G$, denoted $\svcfc(G)$, is the smallest integer $k$ such that $G$ has a strong conflict-free vertex-connection $k$-coloring.
\end{definition}
It is interesting to note that, in contrast to conflict-free vertex-connection colorings, strong conflict-free vertex-connection colorings are proper vertex-colorings; a \emph{proper vertex-coloring}, or just a \emph{coloring}, of a graph is an assignment of colors in $[k]$ to the vertices with no monochromatic edges. Graph coloring is another fundamental topic in graph theory and combinatorial optimization, with wide-ranging applications and many interesting open questions. 
For this reasons, many variants of graph coloring have been introduced and intensively studied. 
To the best of our knowledge, the concept of strong conflict-free vertex-connection coloring has not been considered before as a variant of the ordinary graph coloring, and our study on strong conflict-free vertex-connection colorings is also motivated by this fact. 

This paper focuses on the computational complexity of deciding, for a given graph $G$ and an integer $k$, whether $G$ has a strong conflict-free vertex-connection $k$-coloring. Below, we formally describe the problems as new variants of the classical {\sc chromatic number} and $k$-{\sc coloring} problems. 

\medskip\noindent
\fbox{
\begin{minipage}{.96\textwidth}
\textsc{strong conflict-free vertex-connection number} ({\sc svcfc})\\[.7ex]
\begin{tabular}{l l}
{\em Instance:}& A connected graph $G$ and an integer $k$.\\
{\em Question:}& $\svcfc(G)\le k$, i.e., does $G$ have a strong conflict-free\\
                       & vertex-connection $k$-coloring\,?
\end{tabular}
\end{minipage}
}

\medskip\noindent
When $k$ is a given constant, i.e., $k$ is not part of the input, we write $k$-{\sc svcfc} instead of {\sc svcfc}:

\medskip\noindent
\fbox{
\begin{minipage}{.96\textwidth}
\textsc{$k$-svcfc}\\[.7ex]
\begin{tabular}{l l}
{\em Instance:}& A connected graph $G$.\\
{\em Question:}& $\svcfc(G)\le k$, i.e., does $G$ have a strong conflict-free\\
                       & vertex-connection $k$-coloring\,?
\end{tabular}
\end{minipage}
}

\medskip\noindent
Note that, unlike graph colorings, strong conflict-free vertex-connection colorings are \emph{not} monotone: deleting vertices may turn the instance from yes to no. This fact indicates that determining the complexity of {\sc svcfc} and $k$-{\sc svcfc} may be harder than that of {\sc chromatic number} and $k$-{\sc coloring}, respectively.

\paragraph{Our results.} The contributions of this paper may be described as follows. Our first basic result is that it can be verified in polynomial time whether a given coloring is a strong conflict-free vertex-connection coloring, that is, {\sc svcfc} and $k$-{\sc svcfc} are in \NP. Note that this task is not obvious because the number of shortest paths between two vertices may be exponential in the vertex number of the input graph.  We then prove the following hardness results, where item (2) is the most interesting one.
\begin{itemize}
\item[(1)] $k$-{\sc svcfc} is \NP-complete for diameter-$d$ graphs for all pairs $(k,d)$ with $k\ge 3$ and $d\ge 2$ except when $(k,d)=(3,2)$.
\end{itemize}
It turns out that the only open case $(k,d)=(3,2)$ 
is in fact the famous long-standing open problem of determining the computational complexity of the classical $3$-{\sc coloring} for diameter-$2$ graphs.
\begin{itemize} 
\item[(2)] $3$-{\sc svcfc} remains \NP-complete even when restricted to $3$-colorable graphs with diameter~$3$, radius~$2$ and domination number~$3$, and cannot be solved in $2^{o(n)}$ time on $3$-colorable $n$-vertex graphs with diameter~$3$, radius~$2$ and domination number~$3$, unless the Exponential Time Hypothesis fails.
\end{itemize}
Note that, in item (2), the restriction on graphs with domination number~$3$ is quite strong when compared to the $3$-{\sc coloring} problem. In fact, deciding whether a graph with bounded domination number is $3$-colorable can be done in polynomial time (\cite[Theorem 5]{JansenK13}). 
Note that also the ETH-lower bound in (2) is stronger than the known one for $3$-coloring graphs with diameter~$3$ and radius~$2$: assuming ETH, $3$-{\sc coloring} cannot be solved on $n$-vertex graphs with diameter~$3$ and radius~$2$ in $2^{o(\sqrt{n})}$ time (\cite{MertziosS16}). 
We also provide two polynomially solvable cases: 
\begin{itemize}
\item[(3)] {\sc svcfc} is solvable in polynomial time when restricted to split graphs and to co-bipartite graphs (complements of bipartite graphs). 
\end{itemize}
In fact, we point out that an \emph{optimal} strong conflict-free vertex-connection coloring, one with minimum color number, of a given split graph or a co-bipartite graph can be computed in linear time.

\noindent
\paragraph{Related work.} Since paths in trees are unique, the concepts of conflict-free vertex-connection colorings and strong conflict-free vertex-connection colorings in trees coincide. Thus, strong conflict-free vertex-connection colorings in trees have been investigated implicitly in~\cite{LiZZMZJ20,LiZ20}. In~\cite{LiZZMZJ20}, Li et al. realized that determining the strong conflict-free vertex-connection number of trees is \lq\lq very difficult.\rq\rq\ For $n$-vertex paths $P_n$, they proved that $\svcfc(P_n)=\lceil\log_2(n+1)\rceil$. Several upper bounds for the (strong) conflict-free vertex-connection number of trees are given in~\cite{LiZZMZJ20} and the exact value of $\svcfc(T)$ for trees $T$ of diameter at most~$4$ has been determined in~\cite{LiZ20}. It follows from~\cite{LiW18+} that $\svcfc(T)$ is upper bounded by $\lceil\log_2(n+1)\rceil$ for $n$-vertex trees~$T$.

\noindent
\paragraph{Organization.} We provide preliminaries with some basic facts on strong conflict-free vertex-connection colorings and introduce some notation in section~\ref{sec:pre}. In section~\ref{sec:NP} we give a polynomial-time algorithm that verifies whether a given coloring is a strong conflict-free vertex-connection coloring. In section~\ref{sec:k>=4} we address the complexity of $k$-{\sc svcfc} for $k\ge 4$,  and in section~\ref{sec:k=3} the complexity of $3$-{\sc svcfc}. In section~\ref{sec:P} we describe polynomial-time algorithms for computing an optimal strong conflict-free vertex-connection coloring in split graphs and in co-bipartite graphs. Section~\ref{sec:con} concludes the paper with some open questions for further research. 


\section{Preliminaries}\label{sec:pre}
We consider only finite, simple and \emph{connected} undirected graphs $G=(V,E)$ with vertex set $V(G)=V$ and edge set $E(G)=E$. 
A \emph{clique} (an \emph{independent set}) is a set of pairwise (non-)adjacent vertices. As usual, $\chi(G)$ and $\omega(G)$ denote the chromatic number and the clique number of $G$, respectively. The following is straightforward.
\begin{proposition}\label{pro:strong->proper} 
Every strong conflict-free vertex-connection coloring is a proper vertex-coloring. In particular, for any graph $G$, $\omega(G)\le \chi(G)\le \svcfc(G)\le |V(G)|$. 
\end{proposition}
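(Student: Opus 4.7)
The plan is to prove the propositions in two short steps, since everything reduces to an observation about shortest paths of length one. First I would show that any strong conflict-free vertex-connection coloring $f\colon V(G)\to[k]$ is proper. Take an arbitrary edge $uv\in E(G)$. Since $u$ and $v$ are adjacent, $\dist_G(u,v)=1$, so the unique shortest $u,v$-path is the edge $uv$ itself, whose vertex set is $\{u,v\}$. By hypothesis this two-vertex path must be conflict-free, i.e., some color occurs on exactly one of its vertices; if $f(u)=f(v)$ then the single color $f(u)$ appears twice and no color appears exactly once, a contradiction. Hence $f(u)\neq f(v)$ and $f$ is a proper vertex-coloring.

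For the inequality chain, the bound $\omega(G)\le\chi(G)$ is the classical fact that the vertices of a maximum clique receive pairwise distinct colors in any proper coloring. The middle bound $\chi(G)\le\svcfc(G)$ follows immediately from the previous paragraph: an optimal strong conflict-free vertex-connection $k$-coloring with $k=\svcfc(G)$ is in particular a proper coloring, so $\chi(G)\le k$. For the upper bound $\svcfc(G)\le|V(G)|$, I would exhibit the coloring that assigns a distinct color to each vertex. Then on every path (in particular every shortest path) every color appears at most once, so every color that does appear is witnessed exactly once; as $G$ is connected and has at least one vertex, this coloring is a strong conflict-free vertex-connection $|V(G)|$-coloring.

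There is no substantive obstacle here: the proof is a direct consequence of the fact that adjacent vertices are joined by a unique shortest path of two vertices, which forces properness, together with the trivial upper bound given by the all-distinct coloring.
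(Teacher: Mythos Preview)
Your proof is correct and is exactly the natural argument the paper has in mind; in fact the paper does not spell out a proof at all, merely prefacing the proposition with ``The following is straightforward.'' Your two-step verification (adjacent vertices force distinct colors via the unique length-one shortest path, and the all-distinct coloring witnesses the upper bound) is precisely the intended justification.
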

Notice that the gap between $\svcfc(G)$ and $\chi(G)$ can be arbitrary large: while $\chi(P_n)=2$ for all $n$-vertex path $P_n$ with $n\ge 2$, it has been proved in~(\cite[Theorem 2.1]{LiZZMZJ20}) that $\svcfc(P_n)=\lceil \log_2(n+1)\rceil$. 

The \emph{distance} between two vertices $u$ and $v$, written $\dist(u,v)$, is the length of a shortest $u,v$-path. 
The \emph{diameter} of~$G$, written $\diam(G)$, is the maximum distance between any two vertices in~$G$, $\diam(G)=\max\{\dist(u,v)\mid u,v\in V(G)\}$, and $\rad(G)=\min_{u\in V(G)}\{\max\{\dist(u,v)\mid v\in V(G)\}\}$ is the \emph{radius} of~$G$. 
We use $N(v)$ to denote the neighborhood of a vertex $v$. Observe that a graph~$G$ has diameter at most~$2$ if and only if, for any two non-adjacent vertices $u$ and $v$ of~$G$, $N(u)\cap N(v)\not=\emptyset$. That is, a graph has diameter at most~$2$ if and only if, two non-adjacent vertices have a common neighbor.  The following is straightforward.
\begin{proposition}\label{pro:diam2}
In any diameter-$2$ graph, proper vertex-colorings and strong conf\-lict-free vertex-connection colorings coincide.
\end{proposition}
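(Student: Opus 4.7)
The plan is to prove both directions of the equivalence, one of which is already given by Proposition~\ref{pro:strong->proper}. Every strong conflict-free vertex-connection coloring is automatically a proper coloring by Proposition~\ref{pro:strong->proper}, so the interesting direction is to show that in a diameter-$2$ graph $G$ every proper coloring $f:V(G)\to[k]$ is already a strong conflict-free vertex-connection coloring.

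To establish this, I would fix an arbitrary pair of distinct vertices $u,v\in V(G)$ and exhibit a strong conflict-free $u,v$-path, splitting into the two possible cases for $\dist(u,v)\in\{1,2\}$. If $\dist(u,v)=1$, the single edge $uv$ is a shortest path of length~$1$; since $f$ is proper, $f(u)\neq f(v)$, so each of the two colors on the path occurs exactly once, making it trivially conflict-free. If $\dist(u,v)=2$, then because $\diam(G)\le 2$ the vertices $u$ and $v$ share a common neighbor $w$, and the path $u,w,v$ is a shortest $u,v$-path (as noted right before the statement, two non-adjacent vertices of a diameter-$2$ graph always have a common neighbor). Since $f$ is proper, $f(w)\neq f(u)$ and $f(w)\neq f(v)$, so the color $f(w)$ occurs on exactly one vertex of the path $u,w,v$, and this path is conflict-free.

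Thus every pair of distinct vertices is joined by a strong conflict-free shortest path, so $f$ is a strong conflict-free vertex-connection coloring. I do not foresee any real obstacle here; the proof is a direct consequence of the characterization of diameter-$2$ graphs via common neighbors combined with the fact that in a path of length at most~$2$ the middle (or only) color is automatically unique in a proper coloring.
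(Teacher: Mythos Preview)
Your proof is correct and is exactly the intended argument; the paper itself omits the proof, declaring the proposition ``straightforward,'' and your case analysis on $\dist(u,v)\in\{1,2\}$ together with Proposition~\ref{pro:strong->proper} is precisely what makes it so.
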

A \emph{cograph}, or a graph without an induced $4$-vertex path $P_4$, has diameter at most~$2$.  Proposition~\ref{pro:diam2} immediately implies that {\sc svcfc} is polynomially solvable on cographs\footnote{It is well known that the chromatic number of cographs can be computed in linear time}, and that the complexity of $3$-{\sc svcfc} restricted to graphs of diameter~2 is the same of $3$-{\sc coloring} in graphs of diameter~$2$. Determining the complexity of $3$-{\sc coloring} in diameter-$2$ graphs, hence the complexity of $3$-{\sc svcfc} in diameter-$2$ graphs, is a notoriously difficult, well-known long-standing open problem in algorithmic graph theory (cf.~\cite{MertziosS16,DebskiPR22}).

Properly $2$-colorable graphs are called \emph{bipartite}. Equivalently, a graph is bipartite if and only if its vertex set can be partitioned into two independent sets. A \emph{complete bipartite} graph is a bipartite graph $G=(V,E)$ with a bipartition $V=A\,\dot\cup\,B$ into independent sets $A$ and $B$ such that $E=\{uv\mid u\in A, v\in B\}$. It is easy to see that a graph is complete bipartite if and only if it is (connected) bipartite and contains no induced $4$-vertex path $P_4$. 

Graphs $G$ having a strong conflict-free vertex-connection $2$-coloring, that is, graphs $G$ with $\svcfc(G)\allowbreak\le2$, can be characterized as follows. 
\begin{proposition}
\label{pro:k=2}
A graph is strongly conflict-free vertex-connection $2$-colorable if and only if it is a complete bipartite graph.
\end{proposition}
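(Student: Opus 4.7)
The plan is to prove both implications separately, using Propositions~\ref{pro:strong->proper} and~\ref{pro:diam2} where helpful.

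For the easier direction ($\Leftarrow$), assume $G$ is complete bipartite with parts $A$ and $B$, and define $f(v)=1$ for $v\in A$ and $f(v)=2$ for $v\in B$. I would check that this $2$-coloring is a strong conflict-free vertex-connection coloring by going through the three cases: (i) if $u\in A, v\in B$, the edge $uv$ is the unique shortest $u,v$-path, and each color appears exactly once; (ii) if $u,v\in A$, any vertex $w\in B$ yields a shortest $u,v$-path $u,w,v$ of length~$2$ on which color~$2$ occurs exactly once; (iii) symmetric to (ii). Hence $\svcfc(G)\le 2$.

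For the harder direction ($\Rightarrow$), suppose $f:V(G)\to\{1,2\}$ is a strong conflict-free vertex-connection coloring. By Proposition~\ref{pro:strong->proper}, $f$ is a proper $2$-coloring, so $G$ is bipartite with color classes $A=f^{-1}(1)$ and $B=f^{-1}(2)$. Assume for contradiction that $G$ is not complete bipartite; then there are $u\in A$ and $v\in B$ with $uv\notin E(G)$. Since $G$ is connected and bipartite, any $u,v$-path has odd length, so $\dist(u,v)=\ell$ for some odd $\ell\ge 3$. Along any shortest $u,v$-path $u=x_0,x_1,\ldots,x_\ell=v$, the colors alternate between $1$ and $2$ (since $f$ is proper), producing $(\ell+1)/2\ge 2$ vertices of each color. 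Thus neither color occurs exactly once on any shortest $u,v$-path, contradicting the assumption that $f$ is a strong conflict-free vertex-connection coloring. Hence $G$ is complete bipartite.

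The argument is essentially a parity observation on shortest paths in bipartite graphs, so no real obstacle is expected; the only care needed is to note that in a bipartite graph shortest paths between different parts have odd length, forcing $\ell\ge 3$ in the non-edge case, from which the color-count argument is immediate.
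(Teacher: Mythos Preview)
Your proof is correct and follows essentially the same approach as the paper: both directions rely on Proposition~\ref{pro:strong->proper} to obtain bipartiteness and then observe that on any shortest path of length at least~$3$ the two colors must alternate, so neither can be unique. The only cosmetic differences are that the paper invokes Proposition~\ref{pro:diam2} for the backward direction (rather than verifying the three cases by hand) and phrases the forward contradiction via an induced $P_4$ (using that connected bipartite $P_4$-free graphs are complete bipartite) instead of a missing cross-edge.
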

\begin{proof}
A complete bipartite graph has diameter~2, hence its (unique) 2-coloring is also strong conflict-free vertex-connection $2$-coloring (by Proposition~\ref{pro:diam2}). 

Conversely, suppose that $G$ admits a strong conflict-free vertex-connection $2$-coloring~$f$. Then $G$ is connected and, since~$f$ is particularly a proper $2$-coloring of $G$ (by Proposition~\ref{pro:strong->proper}), $G$ is a bipartite graph. 
Moreover,~$G$ cannot contain any induced $P_4$ $u-v-w-x$, otherwise the distance between~$u$ and~$x$ is three (as~$G$ is bipartite) and every shortest $u,x$-path in~$G$ has no unique color (as $f$ uses only two colors). Thus, $G$ is a connected $P_4$-free bipartite graph, or equivalently,~$G$ is complete bipartite. 
\qed
\end{proof}
As a corollary from Proposition~\ref{pro:k=2}, $2$-{\sc svcfc} can be solved in linear time. 

A \emph{co-bipartite} graph is the complement of a bipartite graph. Equivalently, $G=(V,E)$ is co-bipartite if the vertex set can be partitioned into two cliques. 
A \emph{split graph} is one whose vertex set can be partitioned into a clique and an independent set. 
Observe that (connected) co-bipartite graphs and split graphs have diameter at most~$3$. 
It is well known that an optimal proper coloring of a given co-bipartite graph and a split graph can be computed in linear time. 

Algorithmic lower bounds in this paper are conditional, based on the Exponential Time Hypothesis (ETH for short)~\cite{ImpagliazzoP01}. The ETH asserts that no algorithm can solve \3SAT\ in subexponential time $2^{o(n)}$ for $n$-variable \textsc{3-cnf} formulas. As shown by the Sparsification Lemma in~\cite{ImpagliazzoPZ01}, the hard cases of \3SAT\ consist of sparse formulas with $m=O(n)$ clauses. Hence, the ETH implies that \3SAT\ cannot be solved in $2^{o(n+m)}$ time. 

It is known (see, e.g.,~\cite[Theorem 3.2]{LokshtanovMS11}) that, assuming ETH, $3$-{\sc coloring} cannot be solved in $2^{o(n)}$ time on $n$-vertex graphs. This fact can be immediately extended for $k$-{\sc coloring}, for any fixed integer $k\ge4$: given a graph $G$, let $G'$ be obtained from $G$ by adding a new vertex and joining it to all vertices in~$G$. Then $G$ is $(k-1)$-colorable if and only if $G'$ is $k$-colorable. Thus, assuming ETH, $k$-{\sc coloring} cannot be solved in $2^{o(n)}$ time on $n$-vertex graphs for any fixed $k\ge 3$.

\section{Verifying strong conflict-free vertex-connection colorings}\label{sec:NP} 
In this section we show that, given a graph $G$ together with a $k$-coloring of $G$, it can be verified  in polynomial time whether the $k$-coloring is a strong conflict-free vertex-connection $k$-coloring of $G$. 
We remind that it is not obvious how this can be done in polynomial time since the number of shortest paths between two vertices may be exponential in the vertex number of the input graph. 
Our algorithm has some similarity to the checking if a given edge-coloring of a graph is a strong conflict-free connection edge-coloring proposed in~\cite[Algorithm 2]{JiLZ20}.

Let $f:V(G)\to [k]$ be a coloring of an $n$-vertex graph $G$. 
Note that we may assume that $k\le n=|V(G)|$.  We have to check whether any two distinct vertices are connected by a strong conflict-free path under the coloring~$f$. 
For two vertices $u$ and $v$ we first build a \emph{level graph} $L[u,v]$ that contains all shortest $u,v$-paths in~$G$ as follows. 
\begin{itemize}
\item Begin with breadth-first search (BFS), starting at $u$, to compute the distance $d=\dist_G(u,v)$, and the distance levels $L_i$, $0\le i\le d-1$, 
\[L_i=\{x\in V(G)\mid \dist_G(u,x)=i\}.\]
\item Let $L_d=\{v\}$ and note that $L_0=\{u\}$.
\item Then $L[u,v]$ is obtained from all levels $L_i$, $0\le i\le d$, and all \emph{vertical edges} of $G$, that is, edges between levels $L_{i-1}$ and $L_i$, $1\le i\le d$. 
\end{itemize}
Note that any shortest $u,v$-path in $L[u,v]$ is a shortest $u,v$-path in $G$ and  any shortest $u,v$-path in $G$ is a shortest $u,v$-path in $L[u,v]$. So to check, if there is a strong conflict-free $u,v$-path in $G$, it suffices to check if one exists in the level graph $L[u,v]$. Note also that, as $f$ is a (proper) coloring, we have only to consider any two vertices at distance at least~$3$. 

Let $V_c$ be the set of vertices in $L[u,v]$ colored by color~$c\in [k]$ under coloring~$f$,
\[V_c=\{x\mid x\in L_0\cup L_1\cup\cdots\cup L_d\,\text{ and}\, f(x)=c\}.\] 
Then, to check whether there is a strong conflict-free $u,v$-path in $L[u,v]$, we have to check whether there is a shortest $u,v$-path on which a color $c$ occurs uniquely (1) on $u$ or on $v$, or (2) on a vertex in level $L_i$ for some $1\le i\le d-1$. Clearly, the first case is equivalent to the facts that $f(u)\not=f(v)$ and $L[u,v]-V_c$ is connected, and the second case is equivalent to the connectedness of $L[u,v]-(V_c\cup (L_i\setminus\{x\}))$ for some $x\in V_c\cap L_i$.   
The details are given in Algorithm~\ref{algo:checkingstrong}. 

\begin{algorithm}[!ht]
\SetKwInOut{Input}{input}\SetKwInOut{Output}{output}
\DontPrintSemicolon 
\Input{A connected graph $G=(V,E)$, a proper coloring $f:V\to [k]$ and two vertices $u, v$ of $G$.}
\Output{\lq\lq strong\rq\rq\ if $u$ and $v$ are connected by a conflict-free shortest $u,v$-path under $f$ and \lq\lq not strong\rq\rq\ otherwise.}

\BlankLine
compute the level graph $L[u,v]$ using BFS\;
\If{$f(u)\not=f(v)$}{
 \ForEach{color $c \in \{f(u),f(v)\}$}{ 
         \If{$L[u,v]-V_c$ is connected}{
         \KwRet \lq\lq strong\rq\rq}
         }
}
\ForEach{color $c\in [k]\setminus\{f(u),f(v)\}$}{
   \For{$i\leftarrow 1$ \KwTo $d-1$}{
      \ForEach{vertex $x\in V_c\cap L_i$}{ 
         \If{$L[u,v]-(V_c\cup (L_i\setminus\{x\}))$ is connected}{
         \KwRet \lq\lq strong\rq\rq}
      }
  }
}
\KwRet \lq\lq not strong\rq\rq

\caption{verifying strong conflict-free vertex-connection coloring}
\label{algo:checkingstrong}
\end{algorithm}

By the above, the correctness of Algorithm~\ref{algo:checkingstrong} is immediate.

Since computing the level graph and testing connectivity based on BFS requires $O(n+m)$ time, the runtime of Algorithm~\ref{algo:checkingstrong} takes $O(n+m)+k\cdot d\cdot n\cdot O(n+m)=O(kn^2)$ steps. 

To verify whether a given $k$-coloring is a strong conflict-free vertex-connection coloring, we call Algorithm~\ref{algo:checkingstrong} for at most $n^2$ pairs of vertices $u$ and $v$. Thus the total runtime needed  is at most $O(kn^4)$, and we obtain:
\begin{theorem}\label{thm:checkingstrong}
There is a polynomial-time algorithm that correctly verifies whether a proper vertex-coloring of a connected graph is a strong conflict-free vertex-connection coloring.
\end{theorem}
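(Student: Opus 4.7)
My plan is to design, for each pair of distinct vertices $u,v$ of $G$, a polynomial-time test that decides whether some shortest $u,v$-path under $f$ is conflict-free, and then to invoke this test for all $O(n^2)$ pairs. Since the number of shortest $u,v$-paths can be exponential in $|V(G)|$, direct enumeration is infeasible; instead, I would compactly represent all such paths by the BFS level graph $L[u,v]$ rooted at~$u$.

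First, I would run BFS from~$u$ to obtain the distance levels $L_0=\{u\},L_1,\ldots,L_d$, where $d=\dist_G(u,v)$ and $v\in L_d$, and let $L[u,v]$ consist of these layers together with the edges of $G$ joining consecutive layers. A standard fact is that the $u,v$-paths in $L[u,v]$ are precisely the shortest $u,v$-paths of $G$, and every such path meets each layer in exactly one vertex. Thus the question reduces to finding a conflict-free $u,v$-path inside $L[u,v]$. Note that, since $f$ is a proper coloring, we may skip any pair with $\dist_G(u,v)\le 2$.

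Second, to decide existence of such a path, I enumerate the color $c\in[k]$ and the vertex~$x$ at which $c$ is supposed to appear uniquely on the path. There are two subcases: (i) $x\in\{u,v\}$, which requires $f(u)\not=f(v)$ and a shortest $u,v$-path avoiding every other vertex of color~$f(x)$; (ii) $x$ lies on some internal layer $L_i$, $1\le i\le d-1$, with $f(x)=c$, in which case a conflict-free shortest path through~$x$ exists if and only if, after removing from $L[u,v]$ the other vertices of color~$c$ together with the other vertices of layer~$L_i$, the vertex~$u$ is still connected to~$v$. Both tests are vertex-deletion-plus-connectivity checks in $L[u,v]$, solvable in $O(n+m)$ time via BFS.

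The runtime analysis is routine bookkeeping: for each of the $O(n^2)$ pairs $(u,v)$, building $L[u,v]$ costs $O(n+m)$, and the enumeration over $c$, $i$, and $x\in V_c\cap L_i$ performs at most $O(kn)$ connectivity checks (since $\sum_{c,i}|V_c\cap L_i|\le n$), each of cost $O(n+m)$; this is clearly polynomial. I do not anticipate a genuine obstacle: the only subtle point is the correctness argument that cases~(i) and~(ii) exhaust every way a shortest $u,v$-path can be conflict-free, which is immediate from the definition, since every conflict-free path must contain \emph{some} vertex carrying a color that appears nowhere else on that path, and the level-graph reformulation lets us localize this witness to a single layer.
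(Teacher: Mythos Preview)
Your proposal is correct and follows essentially the same approach as the paper: build the BFS level graph $L[u,v]$, then for each candidate colour~$c$ and candidate witness vertex~$x$ (endpoint or internal), reduce the existence of a conflict-free shortest $u,v$-path with unique colour~$c$ at~$x$ to a single connectivity test in $L[u,v]$ after deleting the other colour-$c$ vertices (and, in the internal case, the other vertices of the layer containing~$x$). Your description of the two cases and the runtime bookkeeping mirror the paper's Algorithm~1 and its analysis almost verbatim; if anything, your phrasing of the deletions (``the \emph{other} vertices of colour~$c$'' and ``$u$ is still connected to~$v$'') is slightly more precise than the paper's.
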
 

\begin{corollary}
{\sc svcfc}, and hence $k$-{\sc svcfc}, is in \NP. 
\end{corollary}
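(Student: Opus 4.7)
The plan is to exhibit, for each problem, a polynomial-size certificate together with a polynomial-time verification procedure, which is the standard characterization of \NP. The natural certificate for {\sc svcfc} on an instance $(G,k)$ (respectively for $k$-{\sc svcfc} on input $G$) is a function $f:V(G)\to[k]$, presented as the list of its $n$ color assignments. Since by Proposition~\ref{pro:strong->proper} we may assume $k\le n=|V(G)|$, this certificate has size $O(n\log n)$.

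For the verifier, I would first spend $O(n+m)$ time scanning the edges of $G$ to check that $f$ is a proper vertex-coloring, since by Proposition~\ref{pro:strong->proper} every strong conflict-free vertex-connection coloring is proper; if this fails, reject. Otherwise, invoke the algorithm underlying Theorem~\ref{thm:checkingstrong} on the pair $(G,f)$, which runs in $O(kn^4)$ time and correctly decides whether $f$ is a strong conflict-free vertex-connection coloring. Accept iff this algorithm accepts. Correctness is immediate from the definition: $\svcfc(G)\le k$ (respectively, the $k$-{\sc svcfc} instance is a yes-instance) iff there exists some $f:V(G)\to[k]$ which is a strong conflict-free vertex-connection coloring, and such an $f$ is exactly what a correct certificate supplies.

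The would-be obstacle is of course the very reason this corollary is worth stating: the number of shortest paths between two vertices may be exponential, so a priori it is not clear that the condition \emph{every pair of vertices admits a conflict-free shortest path} can even be checked in polynomial time. However, this obstacle has already been dispatched by Theorem~\ref{thm:checkingstrong} via the level-graph construction and the connectivity tests on $L[u,v]-V_c$ and $L[u,v]-(V_c\cup(L_i\setminus\{x\}))$; so nothing further needs to be done, and the corollary follows.
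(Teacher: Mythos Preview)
Your proposal is correct and follows exactly the intended route: the paper states this as an immediate corollary of Theorem~\ref{thm:checkingstrong}, and you have simply spelled out the standard \NP-membership argument (polynomial-size certificate $f$ plus the polynomial-time verifier supplied by that theorem). There is nothing to add or change.
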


\section{Hardness result: $k\ge 4$ colors}\label{sec:k>=4}
In this section we will show that, for any constant $k\ge 4$, $k$-{\sc svcfc} is \NP-complete, even when restricted to graphs of diameter exactly~$d$ for any given integer $d\ge 2$. 

The case of $d=2$ can be quickly seen by a simple reduction from $(k-1)$-{\sc coloring}: given a (non-complete) graph $G$, let $G'$ be obtained from~$G$ by taking a new vertex $u$ and adding an edge between~$u$ and each vertex of $G$. Then~$G'$ has diameter~2 and has a (strong conflict-free vertex-connection) $k$-coloring if and only $G$ has a $(k-1)$-coloring. Moreover, recall that, assuming ETH, $(k-1)$-{\sc coloring} cannot be solved in $2^{o(|V(G)|)}$ time, hence $k$-{\sc svcfc} cannot be solved in $2^{o(n)}$ time on diameter-$2$ $n$-vertex graphs under ETH.

Let $d\ge 3$. We first describe suitable graphs of any given diameter~$d$ that are strongly conflict-free vertex-connection $3$-colorable. 

For any integer $n\ge 1$, let $Q_n$ be the graph with~$3n+1$ vertices $a_i, b_i$, $1\le i\le n$, and $c_i$, $0\le i\le n$, and~$5n$ edges $c_ia_{i+1}$, $0\le i\le n-1$, $c_ia_i$, $1\le i\le n$, $c_ib_{i+1}$, $0\le i\le n-1$, $c_ib_i$, $1\le i\le n$, and $a_ib_i$, $1\le i\le n$. 
For any integer $n\ge 2$, let $R_n$ be the graph obtained from $Q_{n-1}$ by taking a new vertex $c_n$ and adding the edge between $c_n$ and the vertex $c_{n-1}$ in $Q_{n-1}$. 
See Fig.~\ref{fig:QR} for the graph $Q_4$ and $R_4$. 

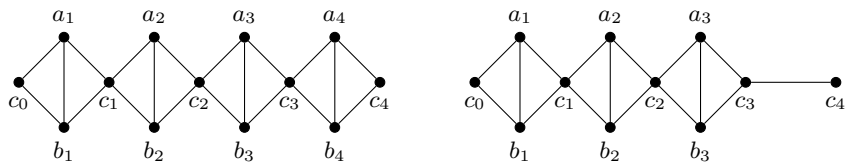
\begin{figure}
\centering
\begin{tikzpicture}[scale=.3]
\node[vertex] (c0) at (1,3) [label=below:{\small $c_0$}] {};
\node[vertex] (a1) at (3,5) [label=above:{\small $a_1$}] {};
\node[vertex] (b1) at (3,1) [label=below:{\small $b_1$}] {};
\node[vertex] (c1) at (5,3) [label=below:{\small $c_1$}] {};
\node[vertex] (a2) at (7,5) [label=above:{\small $a_2$}] {};
\node[vertex] (b2) at (7,1) [label=below:{\small $b_2$}] {};
\node[vertex] (c2) at (9,3) [label=below:{\small $c_2$}] {};
\node[vertex] (a3) at (11,5) [label=above:{\small $a_3$}] {};
\node[vertex] (b3) at (11,1) [label=below:{\small $b_3$}] {};
\node[vertex] (c3) at (13,3) [label=below:{\small $c_3$}] {};
\node[vertex] (a4) at (15,5) [label=above:{\small $a_4$}] {};
\node[vertex] (b4) at (15,1) [label=below:{\small $b_4$}] {};
\node[vertex] (c4) at (17,3) [label=below:{\small $c_4$}] {};

\draw (c0)--(a1)--(c1)--(a2)--(c2)--(a3)--(c3)--(a4)--(c4);
\draw (c0)--(b1)--(c1)--(b2)--(c2)--(b3)--(c3)--(b4)--(c4);
\draw (a1)--(b1); \draw (a2)--(b2); \draw (a3)--(b3); \draw (a4)--(b4);
\end{tikzpicture} 
\qquad
\begin{tikzpicture}[scale=.3]
\node[vertex] (c0) at (1,3) [label=below:{\small $c_0$}] {};
\node[vertex] (a1) at (3,5) [label=above:{\small $a_1$}] {};
\node[vertex] (b1) at (3,1) [label=below:{\small $b_1$}] {};
\node[vertex] (c1) at (5,3) [label=below:{\small $c_1$}] {};
\node[vertex] (a2) at (7,5) [label=above:{\small $a_2$}] {};
\node[vertex] (b2) at (7,1) [label=below:{\small $b_2$}] {};
\node[vertex] (c2) at (9,3) [label=below:{\small $c_2$}] {};
\node[vertex] (a3) at (11,5) [label=above:{\small $a_3$}] {};
\node[vertex] (b3) at (11,1) [label=below:{\small $b_3$}] {};
\node[vertex] (c3) at (13,3) [label=below:{\small $c_3$}] {};
\node[vertex] (c4) at (17,3) [label=below:{\small $c_4$}] {};

\draw (c0)--(a1)--(c1)--(a2)--(c2)--(a3)--(c3)--(c4);
\draw (c0)--(b1)--(c1)--(b2)--(c2)--(b3)--(c3);
\draw (a1)--(b1); \draw (a2)--(b2); \draw (a3)--(b3); 
\end{tikzpicture} 
\caption{The graph $Q_4$ (left) and $R_4$ (right).
}\label{fig:QR}
\end{figure}

\begin{observation}
\label{obs:2}
$Q_n$ has diameter $2n$, $R_n$ has diameter $2n-1$. Both $Q_n$ and $R_n$ are strongly conflict-free vertex-connection $3$-colorable.
\end{observation}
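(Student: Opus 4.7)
The plan is, for each of $Q_n$ and $R_n$, first to pin down the diameter by an elementary distance computation and then exhibit an explicit $3$-coloring that satisfies the strong conflict-free condition via a column-choice argument.

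For the diameter of $Q_n$, I would note that the central vertices $c_{i-1}$ and $c_i$ are non-adjacent and every $c_{i-1},c_i$-path must traverse the intermediate column $\{a_i,b_i\}$, so moving between consecutive central vertices costs exactly two edges. This yields $\dist(c_0,c_n)\ge 2n$, with equality witnessed by the path $c_0,a_1,c_1,a_2,\ldots,a_n,c_n$; a routine BFS from any vertex confirms no pair is farther apart, so $\diam(Q_n)=2n$. Since $R_n$ is obtained from $Q_{n-1}$ by attaching the pendant $c_n$ to $c_{n-1}$, we get $\dist_{R_n}(c_0,c_n)=\dist_{Q_{n-1}}(c_0,c_{n-1})+1=2(n-1)+1=2n-1$, again the overall maximum.

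For the $3$-coloring I would set $f(c_i):=1$ for every $i$ in $Q_n$ (with additionally $f(c_n):=2$ in $R_n$), $f(a_i):=2$ and $f(b_i):=3$. Properness is immediate since each edge either connects a $c$-vertex to an $a/b$-vertex, is the edge $a_ib_i$ between colors $2$ and $3$, or is the added edge $c_{n-1}c_n$ between colors $1$ and $2$. The structural fact I would exploit is that every shortest path in $Q_n$ moves monotonically through the natural column sequence $c_0,\{a_1,b_1\},c_1,\{a_2,b_2\},\ldots,c_n$, and at each intermediate $\{a_k,b_k\}$-column the path may freely pick either $a_k$ or $b_k$.

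To verify strong conflict-freeness I would case-split on $\dist(u,v)$. Pairs at distance $1$ or $2$ admit a short direct check: the $2$- or $3$-vertex shortest path realizes at least one color exactly once (e.g., the path $a_i,c_i,a_{i+1}$ gives a unique $1$, while $a_i,c_i,b_{i+1}$ is rainbow). For distance $d\ge 3$ the shortest path contains at least one intermediate $\{a_k,b_k\}$-column; if neither endpoint has color $3$, I would set all intermediate $a/b$-choices to $a_k$'s except one $b_m$, so color $3$ occurs exactly once on the path, whereas if some endpoint already has color $3$ (i.e., is some $b_i$) I would instead isolate a single $a_m$ among otherwise-$b$ choices, making color $2$ unique. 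In $R_n$ the only new cases have one endpoint equal to $c_n$, where the shortest path begins with the pendant edge $c_{n-1}c_n$ and its remainder lies in $Q_{n-1}$, so the same column-choice trick applies. The main obstacle is the bookkeeping of endpoint-color collisions, especially in $R_n$ where $f(c_n)=2$ coincides with the natural $a$-choices in the middle and therefore forces a deliberate single-$b_m$ substitution to preserve a uniquely-occurring color; once this substitution is made, the verification is routine.
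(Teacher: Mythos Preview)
Your approach is the same as the paper's: both take $\dist(c_0,c_n)$ as the diameter and use the identical $3$-coloring ($c$-vertices one color, $a$-vertices another, $b$-vertices the third, with $c_n$ recolored in $R_n$), the paper merely asserting that verification is immediate while you spell out the column-choice argument.

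One small imprecision to fix: in your distance-$\ge 3$ case split, when one endpoint is some $b_i$ (color~$3$) and the \emph{other} endpoint is some $a_j$ (color~$2$), your rule ``isolate a single $a_m$ among otherwise-$b$ choices'' produces two color-$2$ vertices ($a_m$ and $a_j$), so color~$2$ is not unique. The obvious repair is to take all intermediate choices as $b$'s in this sub-case (then $a_j$ is the unique color-$2$ vertex), or equivalently all $a$'s (then $b_i$ is the unique color-$3$ vertex). With that tweak the argument goes through.
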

\begin{proof}
The diameter of $Q_n$, respectively, $R_n$, is the distance between $c_0$ and~$c_n$, which is $2n$ in $Q_n$ and $2n-1$ in $R_n$. Color all vertices $a_i$ with color~$a$, all vertices~$b_i$ with color~$b$. 
Then, in $Q_n$, color all vertices~$c_i$ with color~$c$. In $R_n$, color all vertices~$c_i$ but $c_n$ with color~$c$ and vertex $c_n$ with color~$a$. Then, it can be immediately verified that the obtained coloring is a strong conflict-free connection $3$-coloring of $Q_n$, respectively, $R_n$. \qed 
\end{proof}

\begin{theorem}
\label{thm:k-and-diam}
For any constants $d\ge 2$ and $k\ge 4$, $k$-{\sc svcfc} is \NP-complete, even when restricted on diameter-$d$ graphs, and assuming ETH, cannot be solved in $2^{o(n)}$ time on diameter-$d$ $n$-vertex graphs. 
\end{theorem}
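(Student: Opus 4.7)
The plan is to reduce from $(k-1)$-{\sc coloring}, which is \NP-complete and, assuming ETH, cannot be solved in $2^{o(n)}$ time for $k-1 \geq 3$. Membership of $k$-{\sc svcfc} in \NP\ already follows from Theorem~\ref{thm:checkingstrong}. The case $d=2$ is handled by the reduction sketched in the excerpt (add a universal vertex $u$ to $G$ and invoke Proposition~\ref{pro:diam2}); for $d \geq 3$, I would augment this universal-vertex trick with the gadgets $Q_m, R_m$ of Observation~\ref{obs:2} so as to inflate the diameter to exactly~$d$ without pushing the number of colors beyond~$k$.

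Concretely, given an instance $G$ of $(k-1)$-{\sc coloring}, I would build $G'$ by taking the disjoint union of~$G$ with $Q_m$ (when $d=2m+1$ is odd) or $R_m$ (when $d=2m$ is even), and adding all edges between the gadget vertex~$c_0$ and every vertex of~$G$. Because $c_0$ is a common neighbor of all vertices of~$V(G)$, every pair there is at distance at most~$2$; every shortest path from a vertex of $V(G)$ to the far endpoint $c_m$ of the gadget is forced to pass through $c_0$, and Observation~\ref{obs:2} gives it length exactly~$d$. A short case check then shows $d$ is the diameter of~$G'$.

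For correctness, the forward direction takes a proper $(k-1)$-coloring of~$G$ with colors $\{1,\ldots,k-1\}$, assigns the fresh color~$k$ to $c_0$, and colors the remaining gadget vertices by the three-coloring of Observation~\ref{obs:2} drawn from $\{1,2,3\} \subseteq [k]$. The resulting $k$-coloring is strong conflict-free vertex-connection because: (i) any two non-adjacent vertices of~$V(G)$ are connected by the shortest path through~$c_0$, on which color~$k$ appears uniquely at~$c_0$; (ii) any shortest path from a vertex of~$V(G)$ to a gadget vertex enters the gadget through~$c_0$, again witnessed uniquely by color~$k$; and (iii) any shortest path between two gadget vertices lies entirely in the gadget (since any detour through $V(G)$ would have to leave and re-enter via~$c_0$, hence would be strictly longer), and Observation~\ref{obs:2} supplies a strong conflict-free witness under the original three-coloring that remains valid after recoloring~$c_0$. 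The backward direction is immediate: a strong conflict-free vertex-connection $k$-coloring of~$G'$ is proper by Proposition~\ref{pro:strong->proper}, and since $c_0$ is adjacent to every vertex of~$V(G)$, its color is absent on~$V(G)$, so we obtain a proper $(k-1)$-coloring of~$G$.

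The ETH lower bound transfers because $d$ is a constant, so the gadget has $O(1)$ vertices and $|V(G')| = |V(G)| + O(1)$; any $2^{o(|V(G')|)}$ algorithm for $k$-{\sc svcfc} on diameter-$d$ graphs would yield a $2^{o(n)}$ algorithm for $(k-1)$-{\sc coloring}. The main obstacle I anticipate lies in step~(iii): one must verify that recoloring $c_0$ does not destroy a strong conflict-free witness on some shortest path through $c_0$. This requires a small case analysis, distinguishing whether the witnessing color on the original path was the color~$3$ carried by $c_0$ (then the new witness is the globally unique color~$k$ at $c_0$) or a different color (then the witness is untouched), but the uniqueness of color~$k$ in~$G'$ makes the argument routine once set up carefully.
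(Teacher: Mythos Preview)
Your proposal is correct and follows essentially the same approach as the paper: reduce from $(k-1)$-\textsc{coloring}, attach the gadget $Q_m$ or $R_m$ at a vertex $c_0$ made universal to $V(G)$, give $c_0$ the fresh color $k$, and use the $3$-coloring of Observation~\ref{obs:2} on the rest of the gadget. Your case analysis for step~(iii) (checking that recoloring $c_0$ preserves a witness on gadget-internal shortest paths) is in fact more explicit than the paper, which dispatches this point in one line by appealing to the global uniqueness of color~$k$.
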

\begin{proof}
We have seen the proof in case $d=2$ at the beginning of this section. Let $d\ge 3$. We give a polynomial-time reduction from $(k-1)$-{\sc coloring}: Given a graph $G$, construct a graph $G'$ from $G$ as follows.
\begin{itemize}
\item If $d\ge 3$ is odd, take $Q_n$ with $n=(d-1)/2$, and make the vertex $c_0$ in $Q_n$ adjacent to all vertices of $G$.
\item If $d\ge 4$ is even, take $R_n$ with $n=d/2$, and make the vertex $c_0$ in $R_n$ adjacent to all vertices of $G$.
\end{itemize}
\noindent
Then $\diam(G')=d$ (by construction and by Observation~\ref{obs:2}), and $G$ is $(k-1)$-colorable if and only if $G'$ is strongly conflict-free vertex-connection $k$-colorable. 
Assume that $G$ has a $(k-1)$-coloring. Let us consider the case $d$ is odd (the case~$d$ is even is similar). Then color~$a_i$ with color~$1$, $b_i$ with color~$2$, $c_i$ with color~$3$, $1\le i\le n-1$, and color $c_0$ with a new color~$k$. This yields a strong conflict-free vertex-connection $k$-coloring of~$G'$: by Observation~\ref{obs:2} and by the fact that color~$k$ is used only on $c_0$, it remains to consider any two non-adjacent vertices in~$G$. A shortest path in~$G'$ connecting such two vertices containing $c_0$ has the unique color~$k$.

The other direction is clear: if $G'$ admits a (strong conflict-free vertex-connec\-tion) $k$-coloring, then the color of the vertex $c_0$ does not occur in $G$, hence $G$ is $(k-1)$-colorable.

For the second part, note that $G'$ has $N=|V(G)|+3\cdot\frac{d-1}{2}+1=O(|V(G)|)$ vertices (if $d$ is odd) and $N=|V(G)|+3\cdot\frac{d}{2}-1=O(|V(G)|)$ (if $d$ is even). Recall that, assuming ETH, $(k-1)$-{\sc coloring} cannot be solved in $2^{o(|V(G)|)}$ time. Hence we can conclude that $k$-{\sc svcfc} cannot be solved in $2^{o(N)}$ time on diameter-$d$ $N$-vertex graphs, unless ETH fails.
\qed 
\end{proof}

Note that the reduction in this section does not work for $k=3$.  The complexity of $3$-{\sc svcfc} will be handled in the next section.

\section{Hardness result: $k=3$ colors}\label{sec:k=3}
A vertex set $D\subset V(G)$ of graph $G$ is a \emph{dominating set} if every vertex in $V(G)\setminus D$ has a neighbor in $D$. The \emph{domination number} of $G$ is the smallest size of a dominating set in $G$. It is well-known that $3$-{\sc coloring} is polynomially solvable when restricted to graphs of bounded domination number (see~\cite[Theorem 5]{JansenK13}), and assuming ETH, cannot be solved in $2^{o(\sqrt{n})}$ time on $n$-vertex graphs with diameter~$3$ and radius~$2$ (see~\cite{MertziosS16}). 

In this section, we prove our main hardness result for $3$-{\sc svcfc}, Theorem~\ref{thm:3333} below, which implies in a sense that $3$-{\sc svcfc} is much harder than the classical $3$-{\sc coloring} problem.

\begin{theorem}
\label{thm:3333}
$3$-{\sc svcfc} is \NP-complete, even when restricted to $3$-colorable graphs with diameter~$3$, radius~$2$ and domination number~$3$. 
Moreover, assuming ETH, no algorithm with runtime $2^{o(n)}$ can solve $3$-{\sc svcfc} on $n$-vertex  graphs with diameter~$3$, radius~$2$, domination number~$3$ and with a given $3$-coloring.
\end{theorem}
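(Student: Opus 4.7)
The plan is to reduce in polynomial time from \3SAT\ to $3$-{\sc svcfc}, producing a graph that is always $3$-colorable and has diameter~$3$, radius~$2$ and domination number~$3$. Given a 3-CNF formula $\phi$ with $n$ variables $x_1,\ldots,x_n$ and $m$ clauses $C_1,\ldots,C_m$ (which, by the Sparsification Lemma, I may assume satisfies $m=O(n)$), I will build a graph $G_\phi$ of $O(n+m)$ vertices such that $\svcfc(G_\phi)\le 3$ iff $\phi$ is satisfiable. Membership in \NP\ is already provided by Theorem~\ref{thm:checkingstrong}, and the linear size of the reduction transfers the $2^{o(n+m)}$ ETH lower bound for \3SAT\ directly to a $2^{o(|V(G_\phi)|)}$ lower bound for $3$-{\sc svcfc}.

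The graph $G_\phi$ has three layers. First, a \emph{dominating triple} $\{d_1,d_2,d_3\}$ is placed so that any proper $3$-coloring assigns the three colors $\{1,2,3\}$ to $d_1,d_2,d_3$ in a fixed pattern (up to the obvious symmetry); the $d_i$ are adjacent to (almost) every other vertex, which will guarantee the radius, diameter and domination bounds, while local obstructions prevent any pair of vertices from dominating $V(G_\phi)$. Second, for each variable $x_i$ there is a \emph{literal pair} $\{t_i,\overline{t}_i\}$ joined to the dominators so that in every proper $3$-coloring $\{f(t_i),f(\overline{t}_i)\}=\{1,2\}$; the two possible orders encode the truth value of $x_i$. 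Third, for each clause $C_j=(\ell_{j,1}\vee\ell_{j,2}\vee\ell_{j,3})$ there is a \emph{clause gadget} centered on two test vertices $u_j,v_j$ placed at distance exactly~$3$ in $G_\phi$: every shortest $u_j,v_j$-path passes through the literal vertex corresponding to some $\ell_{j,r}$, and the four vertices of that path are colored so that the path is strong conflict-free iff $\ell_{j,r}$ evaluates to true under the assignment encoded by the literal pairs.

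For the forward direction, a satisfying assignment yields a proper $3$-coloring in which, for each clause, at least one of the three shortest $u_j,v_j$-paths is conflict-free; every remaining vertex pair acquires a strong conflict-free shortest path essentially for free, because each such pair admits a shortest path through some $d_i$ whose color appears uniquely on that short path. Conversely, any strong conflict-free $3$-coloring of $G_\phi$ restricts to a valid literal-pair coloring (by properness together with the anchoring from the dominators), and the induced assignment must then satisfy every clause, since otherwise no shortest $u_j,v_j$-path can be conflict-free. The main obstacle I anticipate is the clause gadget: one must simultaneously (a) keep exactly three shortest $u_j,v_j$-paths, so that no unintended \lq\lq shortcut\rq\rq\ via adjacent dominators or neighboring clauses is created---delicate in a diameter-$3$ graph whose dominators are nearly universal; (b) make conflict-freeness of each such path equivalent to the corresponding literal being true under the forced literal-pair coloring (which typically requires coloring $u_j,v_j$ themselves with a particular color, say~$3$, so that the unique color on the path is forced to appear on the literal vertex only for the correct orientation); and (c) preserve the global radius, diameter and domination parameters when many gadgets are glued in parallel. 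A secondary technical point is showing that the domination number is exactly~$3$ rather than $\le 2$, which I would arrange by planting in the variable layer a small \lq\lq isolator\rq\rq\ subgraph that no pair of non-dominator vertices can jointly dominate.
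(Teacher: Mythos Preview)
Your outline has the right shape---a polynomial reduction from \3SAT\ producing an $O(n+m)$-vertex graph with the stated parameters, with literal pairs whose two proper colors encode truth values---and this is indeed what the paper does. But the part you flag as ``the main obstacle'' is also the part you leave unresolved, and the paper's construction handles it differently from what you sketch.

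Two concrete divergences are worth noting. First, the paper does \emph{not} use a dominating triple of nearly universal vertices: that design makes it essentially impossible to keep any pair at distance~$3$ without carving out so many non-adjacencies that the radius/domination analysis becomes fragile. Instead there are four special vertices $a,b,c,d$, with $a$ adjacent to all literal vertices and to one vertex of each clause triangle, $b$ and $c$ each adjacent to a different vertex of every clause triangle, and $d$ a common neighbor of $a,b,c$ (which is what yields $\ecc(a)=\ecc(b)=\ecc(c)=2$). The set $\{a,b,c\}$ dominates, but none of $a,b,c$ is close to universal.

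Second, the clause test is not a per-clause pair $u_j,v_j$ whose only shortest paths run through the three literals of $C_j$. In the paper each clause $C_j$ contributes a triangle $c_j,c_j^a,c_j^b$ with $c_j$ tied to the literals and to $c$, $c_j^a$ tied to $a$, and $c_j^b$ tied to $b$. A short forcing argument (using distance-$3$ pairs $c_j^b$ versus literal vertices of variables \emph{not} in $C_j$) pins down the colors of all special and triangle vertices in any strong conflict-free $3$-coloring. The satisfiability constraint then appears as the \emph{inter-clause} distance-$3$ pair $(c_j,\,c_{j'}^a)$: all shortest $c_j,c_{j'}^a$-paths pass through a literal of $C_j$ (or through $c$), and one of them is conflict-free iff some literal of $C_j$ receives the ``true'' color. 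This sidesteps your obstacle~(a): you never need to isolate a per-clause pair from the dominators, because the test pair spans two clause gadgets and the anchors $a,b,c$ are themselves the hubs through which the relevant shortest paths run.

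So your plan is not wrong in spirit, but the gadget you gesture at would need substantial work to avoid the shortcut problem you yourself identify; the paper's construction avoids it by a different routing of the critical distance-$3$ pairs.
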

\begin{proof}
We give a polynomial-time reduction from $3$-{\sc sat} 
to $3$-{\sc svcfc}. Let $\phi$ be a {\sc 3-cnf}-formula with $m$ clauses and $n$ variables. 
We construct a graph $G$ as follows (see also Fig.\ref{fig:3333}).

\begin{itemize}
\item
We begin with four \emph{special vertices}, $a$, $b,c$ and~$d$, with edges $ab$, $ac$, $ad$, $bd$ and $cd$. 
\item 
For each variable $x$, create two adjacent \emph{literal vertices} $v_x$ and $v_{\overline{x}}$, and make~$v_x$ and~$v_{\overline{x}}$ adjacent to the special vertex~$a$. 
\item
For each clause $C_j$, create a triangle $c_j, c_j^a,c_j^b$. Call the vertex~$c_j$ a \emph{clause vertex}, and make it adjacent to those literal vertices if the corresponding literal is contained in clause~$C_j$.
\item 
For each $j\in [m]$, 
\begin{itemize}
   \item make $c_j$ adjacent to the special vertex $c$, 
   \item make $c_j^a$ adjacent to the special vertex $a$, and 
   \item make $c_j^b$ adjacent to the special vertex $b$.
\end{itemize}
\end{itemize}

\begin{figure}
\centering
\begin{tikzpicture}[scale=.3]
\node[vertex] (c1) at (.5,9.5) [label=left:{\small $c_1$}] {};
\node[vertex] (c1a) at (4,9) [label=right:{\small $c_1^a$}] {};
\node[vertex] (c1b) at (.5,6) [label=left:{\small $c_1^b$}] {};
\node[vertex] (c2) at (8.5,10) [label=left:{\small $c_2$}] {};
\node[vertex] (c2a) at (11.5,9) [label=right:{\small $c_2^a$}] {};
\node[vertex] (c2b) at (9,6) [label=left:{\small $c_2^b$}] {};
\node[vertex] (c3) at (23.5,9.5) [label=right:{\small $c_3$}] {};
\node[vertex] (c3a) at (20.5,9) [label=left:{\small $c_3^a$}] {};
\node[vertex] (c3b) at (24,6) [label=right:{\small $c_3^b$}] {};
\node[vertex] (c4) at (31.5,10) [label=right:{\small $c_4$}] {};
\node[vertex] (c4a) at (28,7) [label=below:{\small $c_4^a$}] {};
\node[vertex] (c4b) at (31.5,6) [label=right:{\small $c_4^b$}] {};
\node[vertex] (x) at (4,17) [label=above:{\small $v_x$}] {};
\node[vertex] (x-) at (8.5,17) [label=above:{\small $v_{\overline{x}}$}] {};
\node[vertex] (y) at (13,17) [label=above:{\small $v_y\,$}] {};
\node[vertex] (y-) at (19,17) [label=above:{\small $v_{\overline{y}}$}] {};
\node[vertex] (z) at (23,17) [label=above:{\small $v_z$}] {};
\node[vertex] (z-) at (27.5,17) [label=above:{\small $v_{\overline{z}}$}] {};
\node[vertex] (a) at (16,22) [label=above:{\small $a$}] {};
\node[vertex] (b) at (19.5,2.5) [label=below:{\small $b$}] {};
\node[vertex] (c) at (12.5,2.5)  [label=below:{\small $c$}] {};
\node[vertex] (d) at (16,8) [label=right:{\small $d$}] {};

\draw[thick] (c1)--(c1a)--(c1b)--(c1); \draw[thick] (c2)--(c2a)--(c2b)--(c2); \draw[thick] (c3)--(c3a)--(c3b)--(c3); \draw[thick] (c4)--(c4a)--(c4b)--(c4);
\draw[thick] (x)--(x-); \draw[thick] (y)--(y-); \draw[thick] (z)--(z-);
\draw[gray] (x)--(a)--(x-); \draw[gray] (y)--(a)--(y-); \draw[gray] (z)--(a)--(z-);
\draw[gray] (a)[bend angle=5,bend right] to(c1a); 
\draw[gray] (a)--(c2a); \draw[gray] (a)--(c3a); 
\draw[gray] (a)[bend angle=7,bend left]to (c4a);
\draw[gray] (a)--(b); \draw[gray] (a)--(c); \draw[gray] (a)--(d); \draw[gray] (b)--(d)--(c);
\draw[gray] (b)--(c1b); \draw[gray] (b)--(c2b); \draw[gray] (b)--(c3b); \draw[gray] (b)[bend angle=3, bend right]to (c4b);
\draw[gray] (c)[bend angle=6, bend left]to (c1); \draw[gray] (c)--(c2); \draw[gray] (c)--(c3); \draw[gray] (c)--(c4);

\draw (x-)--(c1)--(y); \draw (z)--(c1);
\draw (x)--(c2)--(y-); \draw (z)--(c2);
\draw (x)--(c3)--(y); \draw (y)--(c3); \draw (z-)--(c3);
\draw (x-)--(c4)--(y-); \draw (z-)--(c4);
\end{tikzpicture} 
\caption{The graph $G$ obtained from the formula $\phi$ with $C_1=\{\overline{x},y,z\}$, $C_2=\{x,\overline{y},z\}$, $C_3=\{x,y,\overline{z}\}$ and $C_4=\{\overline{x},\overline{y},\overline{z}\}$.
}\label{fig:3333}
\end{figure}

Observe that $G$ has $3m+2n+4$ vertices and can be created in polynomial time. 

\begin{lemma}\label{lem:333}
$G$ is $3$-colorable, has domination number~$3$, diameter~$3$ and radius~$2$.
\end{lemma}
\emph{Proof of Lemma~\ref{lem:333}.} 
Observe that the vertex set of~$G$ can be partitioned into three independent sets, $V(G)=I_1\,\dot\cup\, I_2\,\dot\cup\, I_3$, with 
\begin{itemize}
\item $I_0=\{v_{x}\mid \text{$x$ is a variable}\}\cup \{c_j^a\mid j\in [m]\}\cup\{b,c\}$,
\item $I_1=\{v_{\overline{x}}\mid \text{$x$ is a variable}\}\cup \{c_j^b\mid j\in [m]\}\cup \{d\}$, and
\item $I_2=\{c_j\mid j\in [m]\}\cup\{a\}$.
\end{itemize}
That is, $G$ is $3$-colorable. 
Observe next that 
\begin{itemize}
\item $N(a)=(I_0\cup I_1)\setminus\{c_j^b\mid j\in [m]\}$,
\item $N(b)=\{c_j^b\mid j\in [m]\}\cup \{a,d\}$, and
\item $N(c)=I_2\cup\{d\}$.
\end{itemize}
That is, $\{a,b,c\}$ is a dominating set of~$G$. It can be seen, by inspection, that there is no smaller dominating set.  
We now argue that $G$ has radius~$2$ and diameter~$3$. For $u\in V(G)$, let $\ecc(u)=\allowbreak\max\{\dist(u,z)\mid z\in V(G)\}$, the eccentricity of~$u$. Then $\rad(G)$ and $\diam(G)$ are the minimum and maximum eccentricity, respectively. By the previous observation and the fact that $d$ is adjacent to $a,b$ and $c$, we have $\ecc(a)=\ecc(b)=\allowbreak\ecc(c)=\ecc(d)=2$. Since $G$ has no universal vertex, it follows that $\rad(G)=2$. 
To see that $G$ has diameter~$3$, consider two non-adjacent vertices $u\not= w$ with no common neighbor. We argue that $\dist(u,w)= 3$.  
By the previous observation, $u,w\notin\{a,b,c,d\}$. 
If $u\in\{v_x,v_{\overline{x}}\}$ for some variable $x$, say $u=v_x$, then, as $u$ and $w$ have no common neighbor, $w\in\{c_j,c_j^b\}$ for some clause $C_j$ not containing~$x$, and $u-a-c_j^a-w$ is a $u,w$-path of length~$3$. The case $w\in\{v_x,v_ {\overline{x}}\}$ is similar. 
So, let $u\in\{c_j,c_j^a,c_j^b\}$ and $w\in\{c_{j'},c_{j'}^a,c_{j'}^b\}$ for some $j\not=j'$. If $u=c_j$ then, as $u$ and $w$ have no common neighbor, $w\not=c_{j'}$ and $u-c-c_{j'}-w$ is a $u,w$-path of length~$3$. If $u=c_j^a$ then $w\not=c_{j'}^a$ and $u-a-c_{j'}^a-w$ is a $u,w$-path of length~$3$. Similarly, if $u=v_j^b$ then $w\not=c_{j'}^b$ and $u-b-c_{j'}^b-w$ is a $u,w$-path of length~$3$. 
Thus, the diameter of $G$ is exactly~$3$, and the proof of Lemma~\ref{lem:333} is complete.

\begin{lemma}\label{lem:phi->G}
If $\phi$ is satisfiable then $G$ admits a strong conflict-free vertex-connection $3$-coloring.
\end{lemma}
\emph{Proof of Lemma~\ref{lem:phi->G}.} 
Given a satisfying assignment for $\phi$, define a $3$-coloring $f$ of~$G$ as follows. 
\begin{itemize}
\item For each variable $x$, 
  \begin{itemize}
      \item if $x=\text{True}$ then $f(v_x)=1$ and $f(v_{\overline{x}})=0$,
      \item if $x=\text{False}$ then $f(v_x)=0$ and $f(v_{\overline{x}})=1$,
  \end{itemize}
  \item $f(a)=f(c_j)=2$, $j\in [m]$,
  \item $f(b)=f(c)=0$, $f(d)=1$,
  \item $f(c_j^a)=0$, and $f(c_j^b)=1$, $j\in [m]$.
\end{itemize}
Clearly, $f$ is a proper $3$-coloring of $G$. We argue that $f$ is a strong conflict-free vertex-connection $3$-coloring. Since $f$ is a proper coloring of $G$, every pair of two distinct vertices at distance at most~2 are connected with a shortest conflict-free path. Since the diameter of~$G$ is three, it therefore remains to consider vertices at distance~$3$ in~$G$. 

Let $u$ and $w$ be two vertices at distance~3 in $G$. Then (see also the proof of Lemma~\ref{lem:333}) we have the following cases up to symmetry: 
\begin{itemize}
\item[(1)] 
$u=v_x$ and $w\in\{c_j, c_j^b\}$ for some variable $x$ and some $j$ with $x\not\in C_j$;
\item[(2)]  
$u=c_j$ and $w\in\{c_{j'}^a,c_{j'}^b\}$ for some $j\not=j'$;
\item[(3)]
$u=c_j^a$ and $w=c_{j'}^b$ for some $j\not=j'$.
\end{itemize}
We now inspect all these cases.

\smallskip\noindent
(1) Consider first $u=v_x$ and $w=c_j$ for some variable $x$ and some $j$ with $x\not\in C_j$. Let $x\in C_{j'}$. If $x=\text{True}$ then $v_x, c_{j'},c,c_j$ is a shortest $u,w$-path with the unique color $f(v_x)=1$. So, let $x=\text{False}$, and let $y$ be a variable such that the literal $\ell\in \{y,\overline{y}\}$ is a true literal in $C_{j}$. Note that $f(v_x)=0$. Then $v_x,a,v_{\ell},c_j$ is a shortest $u,w$-path with the unique color $f(v_{\ell})=1$. The case $u=v_{\overline{x}}$ and $w=c_j$ is similar. 

Consider next $u=v_x$ and $w=c_j^b$ for some variable $x$ and some $j$ with $x\not\in C_j$. The shortest $u,w$-path $v_x,a,c_j^a, c_j^b$ has the unique color $f(c_j^a)=0$ (if $x=\text{True}$) or the unique color $f(c_j^b)=1$ (otherwise). 


\smallskip\noindent
(2) Consider first $u=c_j$ and $w=c_{j'}^a$ for some $j\not=j'$. Let $y$ be a variable such that the literal $\ell\in \{y,\overline{y}\}$ is a true literal in $C_{j}$. Note that $f(v_{\ell})=1$. Then $c_j,v_{\ell}, a, c_{j'}^a$ is a shortest $u,w$-path with the unique colors $f(v_{\ell})=1$ and $f(c_{j'}^a)=0$.  

Consider next $u=c_j$ and $w=c_{j'}^b$ for some $j\not=j'$. The shortest $u,w$-path $c_j, c_j^b, b, c_{j'}^b$ has the unique colors $f(c_j)=2$ and $f(b)=0$.  

\smallskip\noindent
(3) $u=c_j^a$ and $w=c_{j'}^b$ for some $j\not=j'$: The shortest $u,w$-path $c_j^a, a, c_{j'}^a, c_{j'}^b$ has the unique colors $f(a)=2$ and $f(c_{j'}^b)=1$.  

\smallskip
We have seen that any two distinct vertices are connected by a shortest path with a unique color. This completes the proof of Lemma~\ref{lem:phi->G}.

\begin{lemma}\label{lem:G->phi}
If $G$ admits a strong conflict-free vertex-connection $3$-coloring then $\phi$ is satisfiable.
\end{lemma}
\emph{Proof of Lemma~\ref{lem:G->phi}.} 
Let $f:V(G)\to\{0,1,2\}$ be a strong conflict-free vertex-connection $3$-coloring of $G$. W.l.o.g. let $f(a)=2$. Then, for each variable $x$, the literal vertices $v_x$ and $v_{\overline{x}}$, as well as the special vertices $b,c$ and~$d$, are colored by color~$0$ or~$1$ as they are adjacent to the special vertex~$a$. Moreover, $f(v_x)\not=f(v_{\overline{x}})$ as $v_x$ and $v_{\overline{x}}$ are adjacent.  Let $f(d)=1$, say. Then $f(b)=f(c)=0$ as $b$ and $c$ are adjacent to $d$, and 
\begin{claim} 
$f(c_j)=2$ for all $j\in [m]$.  
\end{claim}
\emph{Proof of the Claim.} Suppose for a contradiction that, for some $j$, $f(c_j)\not=2$. Then $f(c_j)=1$ (as~$c_j$ is adjacent to~$c$ and $f(c)=0$) and $f(c_j^a)=0$ (as $c_j^a$ is adjacent to $a$ and $f(a)=2$), and therefore $f(c_j^b)=2$. 

Now let $x$ be a variable such that $x$ and $\overline{x}$ both are not contained in the clause~$C_j$. 
Note that the literal vertices $v_x$ and $v_{\overline{x}}$ are non-adjacent to the clause vertex~$c_j$, and that $v_x$ and $v_{\overline{x}}$ have no common neighbors with $c_j^b$. 
Thus, $c_j^b, c_j^a, a, v$ and $c_j^b, b, a, v$ are the shortest paths between~$c_j^b$ and $v\in\{v_x,v_{\overline{x}}\}$. Now, for $v\in\{v_x,v_{\overline{x}}\}$ with $f(v)=0$, none of the above shortest $c_j^b,v$-path has a unique color, a contradiction. 
Thus, $f(c_j)=2$ for all $j\in [m]$ as claimed.

It follows that $f(c_j^a)=0$ and $f(c_j^b)=1$ for all $j\in [m]$.

We now argue that $\phi$ is satisfied by assigning literals $\ell$ with True if $f(v_\ell)=1$ and False if $f(v_\ell)=0$:
suppose there is a clause $C_j$ with $f(v_{\ell})=0$ for all literals $\ell\in C_j$. 
Note that, for all $j'\not=j$, all shortest $c_j,c_{j'}^a$-paths are
\[c_j, v_\ell, a, c_{j'}^a \text { for all $\ell\in C_j$, and } c_j, v_{\ell}, c_{j'}, c_{j'}^a \text{ for all $\ell\in C_j\cap C_{j'}$},\] 
and
\[c_j,c,c_{j'},c_{j'}^a.\]
But none of them has a unique color, a contradiction. 

Thus, every clause $C_j$ must contain a literal $\ell$ such that $f(v_\ell)=1$, i.e., $\ell$ is a true literal in $C_j$. The proof of Lemma~\ref{lem:G->phi} is complete.

Now, the first part of Theorem~\ref{thm:3333} follows from Lemmata~\ref{lem:333}, \ref{lem:phi->G} and~\ref{lem:G->phi}. The second part follows from the fact that $G$ has $3m+2n+1=O(m+n)$ vertices, and assuming ETH, $3$-{\sc sat} cannot be solved in $2^{o(n+m)}$ time.\qed
\end{proof}

We now point out that $3$-{\sc svcfc} is \NP-complete for diameter-$d$ graphs for every fixed integer $d\ge 4$. Note that a similar reduction to the case $k\ge4$ colors in section~\ref{sec:k>=4} does not work for~$3$ colors. Instead, given an instance $\phi$ of $3$-{\sc sat}, we first construct an equivalent instance~$G$ for $3$-{\sc svcfc} in the proof of Theorem~\ref{thm:3333}. Observe that the distance in $G$ from the special vertex~$a$ to all other vertices $x\in V(G)\setminus\{a\}$ is at most~$2$. Then, given $d\ge 4$, let $G'$ be obtained from $G$ and $Q_n$, respectively, $R_n$ as follows; see Fig.~\ref{fig:QR} for the graphs $Q_n$ and $R_n$:
\begin{itemize}
\item If $d\ge 4$ is even, take $Q_n$ with $n=(d-2)/2$, and identify  the vertex $c_0$ in $Q_n$ with the special vertex~$a$ of $G$.
\item If $d\ge 5$ is odd, take $R_n$ with $n=(d-1)/2$, and identify the vertex $c_0$ in $R_n$ with the special vertice~$a$ of $G$.
\end{itemize}
\noindent
Then $\diam(G')=d$ (by construction and by Observation~\ref{obs:2}), and it is a routine to verify that $G$ is strongly conflict-free vertex-connection $3$-colorable if and only if $G'$ is. Moreover, as $d$ is a given constant, $|V(G')|=O(|V(G)|)$, Theorem~\ref{thm:3333} implies:
\begin{theorem}\label{thm:k=3}
$3$-{\sc svcfc} is \NP-complete even when restricted to diameter-$d$ graphs for every fixed $d\ge 3$, and assuming ETH, cannot be solved in $2^{o(n)}$ time on diameter-$d$ $n$-vertex graphs. 
\end{theorem}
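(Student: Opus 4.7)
The plan is to leverage Theorem~\ref{thm:3333} as a black-box base case for $d=3$, and then for each larger $d$ boost the diameter without destroying either the coloring equivalence or the linear size bound, by gluing the auxiliary graphs $Q_n$ and $R_n$ introduced before Observation~\ref{obs:2}. Concretely, starting from an instance $\phi$ of $3$-\textsc{sat}, I would first build the graph $G$ exactly as in the proof of Theorem~\ref{thm:3333}. The decisive structural feature of $G$ to exploit is that the special vertex $a$ has eccentricity $2$ in $G$, so appending a ``tail'' at $a$ is enough to realise any diameter $d\ge 3$.

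For $d\ge 4$ even set $n=(d-2)/2$ and take a disjoint copy of $Q_n$, for $d\ge 5$ odd set $n=(d-1)/2$ and take $R_n$; then form $G'$ by identifying the vertex $c_0$ of the tail with $a$. I would verify $\diam(G')=d$ in two short steps: inside the tail the diameter is $2n$ (resp.\ $2n-1$) by Observation~\ref{obs:2}, which is strictly smaller than $d$; and for any vertex $u\in V(G)$ and any tail vertex $w$ the shortest $u,w$-path necessarily traverses $a$, giving $\dist_{G'}(u,w)\le \ecc_G(a)+\dist_{\text{tail}}(a,w)\le 2+(d-2)=d$, with equality attained by $w=c_n$ and any $u$ at distance $2$ from $a$ in $G$.

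The coloring equivalence is the core step, and I expect it to be the main obstacle because we must glue together two $3$-colorings that were designed independently. For the forward direction, given a satisfying assignment of $\phi$, combine the $3$-coloring of $G$ produced in Lemma~\ref{lem:phi->G} with the $3$-coloring of $Q_n$ (or $R_n$) from Observation~\ref{obs:2}, aligning them so that the color assigned to $a$ in $G$ equals the color assigned to $c_0$ in the tail (both colorings use three colors on a vertex of degree at least two, so an appropriate permutation of one of the palettes achieves this). I would then check strong conflict-freeness for the three types of pairs: pairs inside $G$ (handled by Lemma~\ref{lem:phi->G}, since no new shortest path between two $G$-vertices is created), pairs inside the tail (handled by Observation~\ref{obs:2}), and mixed pairs $(u,w)$ with $u\in V(G)\setminus\{a\}$ and $w$ a tail vertex, whose every shortest $u,w$-path is the concatenation of a shortest $u,a$-path in $G$ and the unique shortest $a,w$-subpath available in the tail, so a conflict-free color comes from the colors used uniquely on the tail segment (as guaranteed by Observation~\ref{obs:2}). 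For the reverse direction, the restriction of any strong conflict-free vertex-connection $3$-coloring of $G'$ to $V(G)$ remains one for $G$, because shortest paths between vertices of $G$ in $G'$ coincide with those in $G$; then Lemma~\ref{lem:G->phi} yields a satisfying assignment of $\phi$.

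Finally, since $d$ is a constant, $|V(G')|=|V(G)|+O(1)=O(m+n)$, so an algorithm solving $3$-\textsc{svcfc} on diameter-$d$ graphs in $2^{o(N)}$ time would solve $3$-\textsc{sat} in $2^{o(n+m)}$ time, contradicting ETH via the Sparsification Lemma. This gives both the \NP-completeness and the conditional lower bound, completing the argument. \qed
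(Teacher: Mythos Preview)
Your plan matches the paper's proof almost exactly: reduce via Theorem~\ref{thm:3333}, identify $c_0$ with the special vertex $a$, use $Q_{(d-2)/2}$ for even $d$ and $R_{(d-1)/2}$ for odd $d$, and observe $|V(G')|=|V(G)|+O(1)$. The diameter computation and the backward direction (restricting a coloring of $G'$ to $V(G)$ and invoking Lemma~\ref{lem:G->phi}) are fine.

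The one weak spot is your treatment of the mixed pairs in the forward direction. Two of your supporting claims are false as stated. First, the shortest $a,w$-path in the tail is \emph{not} unique: in $Q_n$ there are in general many shortest $c_0,w$-paths (you may route through $a_i$ or $b_i$ at each level). Second, and more importantly, the sentence ``a conflict-free color comes from the colors used uniquely on the tail segment (as guaranteed by Observation~\ref{obs:2})'' does not hold: Observation~\ref{obs:2} only guarantees a unique color on a shortest path \emph{between two tail vertices}, and that unique color can be killed by the $G$-segment. For a concrete failure, take $u=c_j^b$ (color~$1$) and $w=b_2$ (color~$1$): with the $G$-segment $c_j^b\!-\!c_j^a\!-\!a$ (colors $1,0,2$) and the tail segment $a\!-\!a_1\!-\!c_1\!-\!b_2$ (colors $2,0,2,1$, whose unique tail color is~$1$), the concatenation $1,0,2,0,2,1$ has no unique color. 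The fix is to exploit the \emph{freedom} in both segments: here the tail segment $a\!-\!b_1\!-\!c_1\!-\!b_2$ gives $1,0,2,1,2,1$ with unique color~$0$. In general, once the palettes are aligned so that $f(a)=f(c_i)=2$, $f(a_i)=0$, $f(b_i)=1$, you can always choose the tail segment to use only $a$-vertices or only $b$-vertices at the intermediate levels, and (for $u=c_j$) choose the $G$-segment through $c_j^a$ or through a true literal of $C_j$, so that one of the colors $0,1$ appears exactly once on the concatenation. The paper hides this behind ``it is a routine to verify'', so your outline is at the right level, but the specific justification you gave would need to be replaced by this short case check.
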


\section{Two polynomial cases}\label{sec:P}
In this section we describe two classes of graphs with diameter at most three for which we are able to find an optimal strong conflict-free vertex-connection coloring in polynomial time. 

\subsection{Split graphs}
Let $G=(V,E)$ be a connected split graph with a partition $V=C\,\dot\cup\, I$ of the vertex set $V$ into a clique $C$ and an independent set $I$. Observe that the diameter of~$G$ is at most~$3$. 
To avoid triviality, we assume that $G$ is not a complete graph, in particular, $C$ is a non-empty clique and $I$ is a non-empty independent set. We may assume further that every vertex in $I$ is non-adjacent to a vertex in $C$: if some $u\in I$ is adjacent to all vertices in $C$ then replace $C$ by $C\cup\{v\}$ and $I$ by $I\setminus\{v\}$. 

If $G$ is a star then the unique $2$-coloring of $G$ is a (optimal) strong conflict-free vertex-connection coloring. So, let us assume that $G$ is not a star, in particular $|C|\ge2$. 
If $|C|=2$ then $G$ admits an optimal strong conflict-free vertex-connection $3$-coloring~$f$ with $f(C)=\{1,2\}$ and $f(I)=\{3\}$.  

So let $|C|\ge 3$. 
We remark that any optimal coloring of $G$ uses exactly $|C|$ colors and such a coloring need not be a strong conflict-free vertex-connection coloring of $G$. We will see that, however, $G$ always admits a strong conflict-free vertex-connection coloring with $|C|$ colors which can be computed efficiently.

Write $k=|C|\ge 3$, and color each vertex~$v\in C$ with an own color $f(v)\in [k]$. Then for each vertex $u\in I$, if some neighbor of $u$ is colored with color~$k$, color~$u$ with the smallest color $c$ that does not occur in $f(N(u))$. Otherwise color $u$ with color $c$, where $c\notin f(N(u))$ is the smallest upper bound of $f(N(u))$. 
More precisely, color $u$ with color $f(u)=c$, where
\[
c=\begin{cases}
\min\, [k]\setminus f(N(u)), & \text{if $k\in f(N(u))$}\\
\min \{i\in [k]\setminus f(N(u))\mid i>\max f(N(u))\}, & \text{otherwise}. 
\end{cases}
\]
Note that, as $u$ has a non-neighbor in $C$, the color $c$ as defined above always exists. By definition, the color $c$ does not occur in $f(N(u))$, hence $f$ is a proper $k$-coloring of $G$. 

\begin{lemma}
\label{lem:split}
$f$ is a strong conflict-free vertex-connection coloring. 
\end{lemma}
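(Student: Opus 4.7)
The plan is to verify the strong conflict-free vertex-connection property separately for pairs of vertices at distance $1$, $2$, and $3$, exploiting the fact that a split graph has diameter at most $3$. Distance-$1$ pairs are trivial since $f$ is already a proper coloring, so each endpoint of the edge has a unique color on the (length-$1$) path. For a distance-$2$ pair $u, w, v$ (path $u - w - v$), either $f(u) = f(v)$, in which case $f(w)$ appears exactly once, or $f(u) \neq f(v)$, in which case all three colors on the path are distinct by properness of $f$; hence every distance-$2$ pair has a strong conflict-free shortest path.

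The crux is distance-$3$ pairs. Any such pair $u, v$ must both lie in $I$ (since $C$ is a clique and every vertex of $I$ sees some vertex of $C$), and the sets $N(u) \cap C$ and $N(v) \cap C$ must be disjoint; consequently the color sets $A_u := f(N(u) \cap C)$ and $A_v := f(N(v) \cap C)$ are disjoint, because $C$ is rainbow-colored. Every shortest $u,v$-path has the form $u - c_1 - c_2 - v$ with $c_1 \in N(u) \cap C$ and $c_2 \in N(v) \cap C$. The plan is to argue that on some such path at least one color is unique. Observe that for no color of the $4$-vertex path to be unique, the color sequence $(f(u), f(c_1), f(c_2), f(v))$ must contain each color at least twice; the only $2{+}2$ pattern compatible with $f(u) \neq f(c_1)$, $f(c_1) \neq f(c_2)$ and $f(c_2) \neq f(v)$ is the alternating one $(a, b, a, b)$, that is, $f(c_2) = f(u)$ and $f(c_1) = f(v)$.

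So it suffices to show that this alternating pattern can be avoided. If $|A_u| \geq 2$, choose $c_1 \in N(u) \cap C$ with $f(c_1) \neq f(v)$, which is possible because $f(v)$ takes a single value in $A_u$; then $f(c_1)$ is distinct from each of $f(u)$ (properness), $f(c_2)$ (since $A_u \cap A_v = \emptyset$) and $f(v)$ (by the choice of $c_1$), hence $f(c_1)$ appears exactly once. The case $|A_v| \geq 2$ is symmetric.

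The remaining case $|A_u| = |A_v| = 1$ is the main obstacle; it requires using the specific structure of the definition of $f$. Write $A_u = \{\alpha\}$ and $A_v = \{\beta\}$. The bad pattern forces $f(u) = \beta$ and $f(v) = \alpha$. Apply the definition of $f$ on singleton neighborhoods: if $\alpha = k$ then $f(u) = 1$, giving $\beta = 1$, and otherwise $f(u) = \alpha + 1$, giving $\beta = \alpha + 1$; analogously for $v$, obtaining $\alpha = 1$ or $\alpha = \beta + 1$ according as $\beta = k$ or not. Checking all four combinations shows each forces either $k = 2$ or $\alpha = \alpha + 2$, both contradictions under the standing hypothesis $k \geq 3$. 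Thus the bad pattern cannot occur, the unique shortest $u,v$-path is strong conflict-free, and $f$ is a strong conflict-free vertex-connection coloring.
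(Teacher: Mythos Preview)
Your proof is correct and follows essentially the same approach as the paper's: both reduce to distance-$3$ pairs, identify the only obstruction as the alternating pattern $f(u)=f(c_2)$, $f(v)=f(c_1)$, dispose of it by switching to another neighbor when $|N(u)|\ge 2$ or $|N(v)|\ge 2$, and in the singleton-neighborhood case derive $k=2$ (or the impossible $\alpha=\alpha+2$) from the explicit formula for $f$. Your write-up is slightly more explicit about the distance-$1$/$2$ cases and the four-way case split at the end, but the argument is the same.
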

\begin{proof} Consider two vertices $u$ and $v$ at distance~$3$, and let $u-x-y-v$ be a shortest $u,v$-path. Note that $u, v\in I$, $x,y\in C$, and $N(u)\cap N(v)=\emptyset$. Let us assume that $f(u)=f(y)$ and $f(v)=f(x)$ (otherwise, $u-x-y-v$ is a strong conflict-free path for $u$ and $v$). 
Now, if $u$ has another neighbor $x'\not=x$ then $f(x')\not\in\{f(u), f(x)\}=\{f(v),f(y)\}$, hence $u-x'-y-v$ is a strong conflict-free path for $u$ and $v$. Similarly, if $v$ has another neighbor $y'\not=y$ then $u-x-y'-v$ is a strong conflict-free path for $u$ and $v$. 

Thus, it remains the case that $N(u)=\{x\}$ and $N(v)=\{y\}$. We argue that this case cannot occur: if $f(u)<f(x)$ then, by definition of $f(u)$, $f(x)=k$ and $f(u)=1$. Thus $f(y)=1$ and $f(v)=k$, hence, by definition of $f(v)$, $f(y)=k-1$. This implies $k=2$, a contradiction to $k\ge 3$. If $f(u)>f(x)$ then $f(v)<f(y)$ and the argument similar to the previous one yields the same contradiction. \qed
\end{proof}

Since $f$ uses $k=|C|$ colors, it is therefore optimal. Clearly, the coloring $f$ can be computed in polynomial time, hence we obtain: 

\begin{theorem}\label{thm:split}
An optimal strong conflict-free vertex-connection coloring of a given split graph can be computed in polynomial time.
\end{theorem}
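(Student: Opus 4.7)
The plan is to verify that the coloring $f$ defined just above Lemma~\ref{lem:split} is both optimal and polynomial-time computable; Lemma~\ref{lem:split} already shows it is a strong conflict-free vertex-connection coloring, so only a matching lower bound and the runtime remain to be argued. The lower bound will come from the trivial observation that $C$ is a clique of size $|C|$ in $G$, whence by Proposition~\ref{pro:strong->proper},
\[
\svcfc(G)\ \ge\ \omega(G)\ \ge\ |C|.
\]

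First I would handle the degenerate cases that the authors peel off before the construction. If $G$ is complete, an injective coloring is trivially optimal. If $G$ is a non-complete star, then $\diam(G)=2$, so by Proposition~\ref{pro:diam2} its unique proper $2$-coloring is already a strong conflict-free vertex-connection coloring, matching $\svcfc(G)\ge\chi(G)=2$. When $|C|=2$ and $G$ is not a star, $G$ must contain an induced $P_4$, so by Proposition~\ref{pro:k=2} we get $\svcfc(G)\ge 3$; the explicit $3$-coloring given in the text (colors $1,2$ on $C$ and color $3$ on $I$) is easily checked to be strong conflict-free vertex-connection, since every distance-$3$ pair $u,v\in I$ is joined by a path $u{-}x{-}y{-}v$ with $x,y\in C$ of distinct colors, leaving $u$ (or $v$) as the unique vertex of color~$3$ on the path.

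For the main case $|C|\ge 3$, I would simply put the pieces together. By construction, $f$ uses exactly $|C|$ colors; the assumption that every $u\in I$ has a non-neighbor in $C$ guarantees that the set $[|C|]\setminus f(N(u))$ is non-empty, so the defining rule always returns a valid color in $[|C|]$ with $f(u)\notin f(N(u))$. Hence $f$ is a proper $|C|$-coloring, Lemma~\ref{lem:split} certifies that it is moreover a strong conflict-free vertex-connection coloring, and the lower bound above shows it is optimal.

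The algorithmic part is straightforward. Fixing any bijection $f|_C\colon C\to[|C|]$ is immediate, and for each $u\in I$ the value $f(u)$ reduces to scanning $N(u)$ and taking a minimum over a subset of $[|C|]$; this runs in $O(|V(G)|+|E(G)|)$ overall, which is in fact linear as claimed in the remark following the statement. The only step that really needed a genuine argument is Lemma~\ref{lem:split}, which is already in place, so no further obstacle remains.
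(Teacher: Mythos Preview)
Your proposal is correct and mirrors the paper's own argument: Lemma~\ref{lem:split} gives correctness of $f$, the clique bound $\svcfc(G)\ge\omega(G)\ge|C|$ from Proposition~\ref{pro:strong->proper} gives optimality, and the runtime is immediate. One small slip in your treatment of the $|C|=2$ case: on the path $u{-}x{-}y{-}v$ both endpoints $u,v\in I$ receive color~$3$, so neither is the unique vertex of that color; the path is conflict-free because colors~$1$ and~$2$ each occur exactly once (on $x$ and $y$).
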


\subsection{Co-bipartite graphs}
Let $G=(V,E)$ be the complement of a bipartite graph, that is, the vertex set $V=A\,\dot\cup\, B$ consists of two disjoint cliques $A$ and $B$. Observe that the diameter of~$G$ is at most~$3$. 

Let $E(A,B)$ be the edge set of $G$ with one end in $A$ and the other end in~$B$. To avoid triviality, we assume that $G$ is not a complete graph, in particular, $A$ and~$B$ are non-empty cliques. 
Note that $E(A,B)\not=\emptyset$ because all graphs considered are connected. 

\begin{lemma}
\label{lem:cobip1}
Assume that $|E(A,B)|=1$, and $|A|\ge |B|$. Then 
\begin{itemize}
\item[\em (a)] If $|A|=|B|$ then any optimal strong conflict-free vertex-connection coloring of $G$ uses $|A|+1$ colors and such a coloring can be computed in polynomial time.
\item[\em (b)] If $|A|>|B|$ then any optimal strong conflict-free vertex-connection coloring of $G$ uses $|A|$ colors and such a coloring can be computed in polynomial time.
\end{itemize}
\end{lemma}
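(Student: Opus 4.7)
The plan is to reduce the statement to the behavior of $G$ on length-$3$ paths. First, since $|E(A,B)|=1$, the unique cross-edge $ab$ forces every distance-$3$ pair to be of the form $(u,v)$ with $u \in A\setminus\{a\}$ and $v \in B\setminus\{b\}$, and the only shortest $u,v$-path in $G$ is $u-a-b-v$. Since pairs at distance $\le 2$ are automatically served by any proper coloring (as in the argument behind Proposition~\ref{pro:diam2}), a proper coloring $f$ of $G$ is a strong conflict-free vertex-connection coloring precisely when each such forced $4$-vertex path admits a color appearing exactly once among $\{f(u),f(a),f(b),f(v)\}$.

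For part~(b), the lower bound $\svcfc(G)\ge|A|$ is immediate from Proposition~\ref{pro:strong->proper} since $\omega(G)=|A|$. For the upper bound I would set $f(a)=1$ and extend arbitrarily to a bijection $A\to[|A|]$; then set $f(b)=|A|$ and extend arbitrarily to an injection $B\setminus\{b\}\to\{2,\ldots,|B|\}$. Because $|A|>|B|$, color $|A|$ does not appear in $\{2,\ldots,|B|\}$, so the coloring is proper on $B$, and on $ab$ it uses $1\ne|A|$. Every forced path $u-a-b-v$ then has colors $(\ge 2,\, 1,\, |A|,\, \ge 2)$, so color $1$ appears uniquely on $a$.

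For part~(a), write $k=|A|=|B|$. The upper bound uses the same idea with $k+1$ colors: put $f(a)=1$, $f(b)=k+1$, and let $A\setminus\{a\}$ and $B\setminus\{b\}$ each receive bijections onto $\{2,\ldots,k\}$. Then both colors $1$ and $k+1$ are unique on every forced $4$-path. The more delicate step is the matching lower bound. The plan is to assume for contradiction that a strong conflict-free $k$-coloring $f$ exists. Both cliques must then receive every color exactly once; moreover $f(a)\ne f(b)$ because $ab$ is an edge. Picking $u\in A$ with $f(u)=f(b)$ (necessarily $u\ne a$) and $v\in B$ with $f(v)=f(a)$ (necessarily $v\ne b$), the forced $u,v$-path $u-a-b-v$ would carry the color multiset $\{f(b),f(a),f(b),f(a)\}$, with every color appearing exactly twice, contradicting strong conflict-freeness.

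The main obstacle I anticipate is the tightness argument in~(a): the length of the forced shortest path is exactly $3$, and the pigeonhole above is what pushes the requirement from $k$ up to $k+1$ colors. The algorithmic statement follows in both parts because the constructions amount to producing two arbitrary bijections from a clique to an initial segment of colors, which takes linear time.
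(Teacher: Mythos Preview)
Your proposal is correct and follows essentially the same approach as the paper: the uniqueness of the shortest $u$--$a$--$b$--$v$ path, the pigeonhole argument for the lower bound in~(a), and constructions that give a cross-edge endpoint a globally unique color for the upper bounds are all identical in spirit. The only differences are cosmetic (you make both $a$ and $b$ uniquely colored in~(a) and fix $f(b)=|A|$ explicitly in~(b), whereas the paper uses a single unique color on $x$ in~(a) and leaves the $B$-coloring less constrained in~(b)).
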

\begin{proof}
Let $E(A,B)=\{xy\}$ with $x\in A$ and $y\in B$. Observe that the path between any vertex in $A$ and any vertex in $B$ is unique. Also observe that any coloring of~$G$ uses at least $|A|$ colors.

\smallskip\noindent
(a): Note that $|A|=|B|\ge 2$ because $G$ is not complete. Consider any coloring $f$ of $G$ with $|A|=|B|$ colors. Note that $f(x)\not=f(y)$, hence some vertex $u\in A\setminus\{x\}$ has color $f(u)=f(y)$ and some $v\in B\setminus\{y\}$ has color $f(v)=f(x)$. Thus, the shortest $u,v$-path in $G$ is not strong conflict-free under coloring~$f$. In particular, any strong conflict-free vertex-connection coloring of~$G$ must use at least $|A|+1$ colors. Now color $x$ with color $f(x)=|A|+1$ and properly color $(A\setminus\{x\})\cup B$ with colors $1, \ldots, |A|$. Then only vertex $x$ is colored with color~$|A|+1$, hence the coloring is a strong conflict-free vertex-connection coloring of~$G$ and is optimal.

\smallskip\noindent
(b): Color $A$ with $|A|$ colors $1, 2,\ldots, |A|$, where each vertex of $A$ has its own color. Let us assume that $x$ is colored by color~$1$, say.  Color~$B$ with $|B|$ colors in $\{2,\ldots,|A|\}$. Then only vertex $x$ is colored with color~$1$, hence the coloring is a strong conflict-free vertex-connection coloring of~$G$ and is optimal.\qed
\end{proof}

\begin{lemma}
\label{lem:cobip2}
Assume that $|E(A,B)|\ge 2$. Then any (vertex-)coloring of $G$ is a strong conflict-free vertex-connection coloring of $G$. 
\end{lemma}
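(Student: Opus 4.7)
The plan is to verify the strong conflict-free condition by case analysis on the distance between two distinct vertices $u$ and $v$, using the diameter bound $\diam(G)\le 3$ and the fact that any coloring $f$ of $G$ (in the paper's terminology) is proper. Throughout, write $E(A,B)$ for the edges between $A$ and $B$.

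First I would handle the easy cases. If $\dist(u,v)=1$, the edge $uv$ is itself a shortest path on which both colors appear exactly once, since $f(u)\neq f(v)$ by properness. If $\dist(u,v)=2$, every shortest path has the form $u-w-v$ for a common neighbor $w$, and properness gives $f(w)\neq f(u)$ and $f(w)\neq f(v)$, so $f(w)$ occurs exactly once on this $3$-vertex path. These cases require no hypothesis on $|E(A,B)|$.

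The nontrivial case is $\dist(u,v)=3$. First I would observe that since $A$ and $B$ are cliques, $u$ and $v$ must lie in distinct sides, say $u\in A$ and $v\in B$; moreover, distance $3$ forces that $u$ has no neighbor in $B$ and $v$ has no neighbor in $A$ (otherwise such a neighbor would be a common neighbor in $B$ or $A$, respectively). Consequently, every shortest $u,v$-path has length exactly $3$ and the form $u-x-y-v$ with $xy\in E(A,B)$, $x\in A\setminus\{u\}$, and $y\in B\setminus\{v\}$. Conversely, every edge $xy\in E(A,B)$ yields such a path.

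For the path $u,x,y,v$ to fail to be strong conflict-free, no color may occur exactly once; given that $f(u)\neq f(x)$, $f(x)\neq f(y)$, and $f(y)\neq f(v)$ by properness, the only possibility is the ``alternating'' pattern $f(u)=f(y)$ and $f(x)=f(v)$. I will suppose for contradiction that \emph{every} edge $xy\in E(A,B)$ produces such a failing path. Picking two distinct edges $x_1y_1, x_2y_2\in E(A,B)$ using the hypothesis $|E(A,B)|\ge 2$, we get $f(x_1)=f(v)=f(x_2)$ and $f(y_1)=f(u)=f(y_2)$. If $x_1\neq x_2$, then $x_1,x_2$ are adjacent vertices of the clique $A$ with equal color, contradicting properness; otherwise $x_1=x_2$ forces $y_1\neq y_2$, and $y_1,y_2$ are adjacent in the clique $B$ with equal color, again contradicting properness.

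The main obstacle is simply to see clearly that at distance $3$ every shortest path corresponds to exactly an edge of $E(A,B)$, so that the hypothesis $|E(A,B)|\ge 2$ can be converted, via the clique structure on $A$ and $B$, into a properness contradiction; once that bijection is spelled out, the argument is a short pigeonhole. This completes the proof sketch of Lemma~\ref{lem:cobip2}.
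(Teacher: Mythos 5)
Your proof is correct and takes essentially the same route as the paper: both arguments reduce to distance-$3$ pairs, observe that every shortest path corresponds to an edge of $E(A,B)$ not incident to the endpoints, note that a path can only fail via the alternating color pattern, and then use a second edge of $E(A,B)$ together with the clique adjacency in $A$ (or $B$) to rule this out by properness. The only cosmetic difference is that you phrase the final step as a contradiction over all edges, whereas the paper directly exhibits the conflict-free path arising from the second edge.
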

\begin{proof}
Consider a coloring $f$ of $G$. Let $x\in A$ and $y\in B$ be two vertices of $G$ at distance~$3$. Then $N(x)\cap B=\emptyset$ and $N(y)\cap A=\emptyset$. Let $x-u-v-y$ be a shortest $x,y$-path with $u\in A$ and $v\in B$. If $f(x)\not= f(v)$ or $f(y)\not=f(u)$ then this path is strong conflict-free for $x$ and $y$. 
So let us consider the case $f(x)=f(v)$ and $f(y)=f(u)$. Now, since $|E(A,B)|\ge 2$, there is an edge $u'v'$ with $u'\in A$ and $v'\in B$; possibly $u'=u$ or $v'=v$ but not both. By symmetry, let $u'\not=u$. Then $f(u')\not\in\{f(x),f(u)\}=\{f(v),f(y)\}$ because $A$ is a clique and $f$ is a proper coloring of $G$. Thus, $x-u'-v'-y$ is strong conflict-free path connecting $x$ and~$y$. Hence $f$ is a strong conflict-free vertex-connection coloring of $G$.\qed 
\end{proof}

Note that a partition into two disjoint cliques, as well as an optimal coloring of a given co-bipartite graph can be computed in polynomial time. Hence, with Lemmata~\ref{lem:cobip1} and~\ref{lem:cobip2}, we immediately obtain:
\begin{theorem}\label{thm:cobip}
An optimal strong conflict-free vertex-connection coloring of a given co-bipartite graph can be computed in polynomial time.
\end{theorem}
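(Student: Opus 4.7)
The plan is to combine Lemmas~\ref{lem:cobip1} and~\ref{lem:cobip2} with the classical fact that an optimal proper coloring of a co-bipartite graph can be computed in polynomial time. First, I would compute a partition $V(G) = A \,\dot\cup\, B$ of the vertex set into two cliques; since $G$ is co-bipartite iff its edge-complement is bipartite, this partition can be found in linear time by BFS on the complement. The trivial case where $G$ is complete returns the natural $|V(G)|$-coloring at once. Otherwise we count $|E(A,B)|$ in $O(|V| + |E|)$ time and branch on it.

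If $|E(A,B)| = 1$, follow the constructive proof of Lemma~\ref{lem:cobip1}: when $|A| = |B|$, assign the endpoint of the cross-edge lying in (w.l.o.g.) $A$ a private $(|A|+1)$-st color and use $\{1,\ldots,|A|\}$ to properly color the remaining vertices; when $|A| > |B|$, color $A$ with $|A|$ distinct colors so that the cross-edge endpoint receives color~$1$, and color $B$ with colors drawn from $\{2,\ldots,|A|\}$. Lemma~\ref{lem:cobip1} shows both constructions are optimal, and both run in linear time.

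If $|E(A,B)| \ge 2$, Lemma~\ref{lem:cobip2} says that \emph{every} proper vertex-coloring of $G$ is automatically a strong conflict-free vertex-connection coloring. Combined with Proposition~\ref{pro:strong->proper}, which gives $\svcfc(G) \ge \chi(G)$, this yields $\svcfc(G) = \chi(G)$, so it suffices to compute an optimal proper coloring of $G$. Since an independent set of $G$ contains at most one vertex from each of $A$ and $B$, a proper $k$-coloring of $G$ corresponds to covering $V(G)$ by at most $k$ independent sets each of size one or two; the two-element sets are precisely the edges of the bipartite complement of $G$ (with parts $A$ and $B$). Hence $\chi(G) = |V(G)| - \mu$, where $\mu$ is the maximum matching in this bipartite complement, which is computable in polynomial time (e.g.\ by Hopcroft--Karp), and the matching immediately yields the coloring.

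The main obstacle is not any single deep step but rather the case distinction itself: the corner case $|E(A,B)| = 1$ with $|A| = |B|$ genuinely forces $\svcfc(G) > \chi(G)$, so one cannot simply identify the two parameters uniformly over co-bipartite graphs. Once the branching is set up, correctness follows from Lemmas~\ref{lem:cobip1} and~\ref{lem:cobip2}, and the polynomial running time is clear since each subroutine (bipartition recognition, cross-edge counting, explicit coloring, bipartite matching) runs in polynomial time.
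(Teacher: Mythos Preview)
Your proposal is correct and follows essentially the same route as the paper: compute the clique bipartition, branch on $|E(A,B)|$, and apply Lemmata~\ref{lem:cobip1} and~\ref{lem:cobip2}. The paper's proof is in fact just a one-sentence remark that the bipartition and an optimal proper coloring are computable in polynomial time (it states this as well known in Section~\ref{sec:pre}), whereas you additionally spell out the matching-based argument for $\chi(G)$; this is extra detail rather than a different approach.
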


\section{Conclusions}\label{sec:con}
In this paper we initiate the study of the computational complexity of strong conflict-free vertex-connection colorings, a new variant of graph coloring. We provide the first hardness results and point out some polynomial cases. 
The following open problems are immediate from our study:
\begin{itemize}
\item[(1)] 
Is there a class of graphs in which computing the strong conflict-free vertex-connection number is polynomially but computing the chromatic number is hard? The results obtained in this paper indicate that such a graph class may not exist.
\item[(2)] 
Characterize and recognize connected graphs (of diameter at least three) in which every proper vertex-coloring is a strong conflict-free vertex-connection coloring. A good characterization for co-bipartite graphs and split graphs with this property could be obtained from the discussion in section~\ref{sec:P}.
\item[(3)] 
Carl Feghali (private communication) observes that the proof of the upper bound $\svcfc(T)\le \lceil\log_2(n+1)\rceil$ for $n$-vertex trees~$T$ in~\cite{LiW18+} can be turned into a polynomial-time algorithm solving \textsc{svcfc} when restricted to trees. Thus, it is natural to extend the tractable cases of trees and split graphs to large classes. In particular, is \textsc{svcfc} solvable in polynomial time when restricted to classical graph classes such as chordal graphs, bipartite graphs or planar graphs?
\end{itemize}

\bibliographystyle{plainurl} 
\bibliography{svcfc.bib}

\end{document}